\newtheorem{theorem}{Theorem}
\newtheorem{lemma}{Lemma}
\newtheorem{definition}{Definition}
\newtheorem{corollary}{Corollary}
\newcommand{\ket}[1]{\left\vert{#1}\right\rangle}
\newcommand{\bracket}[3]{\left\langle #1 \big| #2 \big| #3 \right\rangle}       % < | | >
\begin{document}

\title{Quantum Query as a State Decomposition}

\author{S. A. Grillo$^{1}$ and F. L. Marquezino$^{1,2}$\\[1em]
  $^{1}$PESC/Coppe and $^{2}$Numpex-Comp/Xerem\\
Universidade Federal do Rio de Janeiro\\
Rio de Janeiro, Brazil\\[1em]
\{sgrillo, franklin\}@cos.ufrj.br}

\maketitle

\begin{abstract}
\noindent
The Quantum Query Model is a framework that allows us to express most
  known quantum algorithms. Algorithms represented by this model
  consist on a set of unitary operators acting over a finite Hilbert
  space, and a final measurement step consisting on a set of
  projectors. In this work, we prove that the application of these
  unitary operators before the measurement step is equivalent to
  decomposing a unit vector into a sum of vectors and then inverting
  some of their relative phases. We also prove that the vectors of
  that sum must fulfill a list of properties and we call such vectors
  a \emph{Block Set}. If we define the measurement step
   for the Block Set Formulation similarly to the Quantum Query Model, then we prove
  that both formulations give the same Gram matrix of output states,
  although the Block Set Formulation allows a much more explicit
  form. Therefore, the Block Set reformulation of the Quantum Query
  Model gives us an alternative interpretation on how quantum
  algorithms works.  Finally, we apply our approach to the analysis and  complexity of quantum
  exact algorithms.
\smallskip

\noindent
{\bf Keywords:} quantum exact algorithms, quantum query
  complexity, computational complexity, analysis of algorithms, design
  of algorithms.
\end{abstract}

\onehalfspacing   % Acho que vai ficar melhor em espaco 1.5

\section{Introduction} 

The Quantum Query Model (QQM) is an important tool in the analysis and
design of quantum algorithms, especially because its simplicity 
allows us to compare classical 
and quantum computing more easily. This model generalizes decision trees
\cite{Dewolf} with complexity being defined as the minimum number of
oracle queries required for computing a given function $f$ for any
input~${x\in \{ 0,1\}^{n}}$.
  
The topic of exact quantum algorithms is less understood than bounded-error algorithms. For many years, the only exact
  quantum algorithms known to produce a speed-up over classical
  algorithms for total functions were those that used Deutsch's
  algorithm as a subroutine~\cite{AMB1}. The numerical method proposed by Barnum \textit{et al.}~\cite{BARNUM} just gives us approximate solutions, whose results can be laborious to translate into analytically defined algorithms for the exact case~\cite{MONTANARO}.  Currently, there is a limited number of research 
  papers that presents results in the analytic construction of exact quantum algorithms~\cite{AMB1,MONTANARO,AMB2,Ambainis4,Gruska,Ambainis5}.
  
In query complexity, the polynomial method~\cite{Beals} and adversary methods \cite{AMB3,Hoyer2} are well known for computing lower-bounds of exact quantum algorithms. There are important results about exact quantum query complexity in the literature obtained from such methods: 
\begin{itemize}
\item For an exact quantum algorithm that computes a total Boolean function within $t$ queries, there is a classical deterministic algorithm that computes the same function by applying $\mathcal{O}\left(t^{3}\right)$ queries \cite{Midrijanis}.
\item Exact quantum algorithms give an advantage for all Boolean functions excepting $AND_{n}$~\cite{Ambainis6}.
\end{itemize}

In our present work, we propose another
reformulation of the QQM  and we apply it to the analysis and complexity of exact quantum algorithms.
%juntei os parágrafos, ok?
The new model proposed in this paper, called \textit{Block Set
  Formulation} (BSF), is shown to be equivalent to the QQM. In this
formulation, the algorithm is represented by a set of vectors
satisfying certain properties, and the unitary operators are replaced
by phase inversions on some of those vectors. This set of
vectors, called \textit{Block Set}, gives an alternative
interpretation on how quantum algorithms work. 
For each input, the BSF constructs a corresponding output
state following a different definition to the QQM. After applying the
measurement step on such output state, however, the results are identical in both models.  
The equivalence between BSF and QQM 
is proved on two steps. First, we prove that for each QQM algorithm of
$t$ queries there is a unique $t$-dimensional Block Set with the same
Gram matrix of output states. Then, we prove that for each
$t$-dimensional Block Set there are several QQM algorithms of $t$
queries with the same Gram matrix of output states. Considering that
two algorithms with the same Gram matrix of output states are
similar---since we can choose the measurement operators
appropriately---then the QQM and the BSF are equivalent.

The BSF can be simplified by proving that a BSF restricted to real
numbers is equivalent to a complex BSF. Assuming a real-valued BSF, we
prove that the Gram matrix of output states is equal to a sum of
matrices, where each of them depends on some pair of elements in the
Block Set. By using the relation between Block Sets and the final Gram
matrix we obtain a necessary and sufficient condition for the
existence of exact quantum algorithms, this condition being formulated
by a system of equations that requires a semi-definite solution. If we
consider a special case of Block Sets in which all elements are
pairwise orthogonal, we obtain a second condition for exact quantum
algorithms from our first system. This second condition is just
sufficient for exact quantum algorithms; nonetheless, we show that it
can also be used as an analytic tool for constructing exact quantum
algorithms. As an example of the application of orthogonal Block Sets,
we define the \textit{XOR-Weighted-Problem} and prove that it can be solved by exact quantum algorithms in the BSF, in which case we also give an upper-bound for its complexity.  Finally, we present a lower-bound for the exact quantum query complexity  of functions with Boolean domain and arbitrary output.

The structure of our paper is as follows. In Sec.~\ref{S1}, 
we briefly review the basic concepts and formulations of the Quantum Query Model.  In
Sec.~\ref{S2}, %
we introduce the Block Set Formulation and prove its equivalence to the Quantum Query Model.  In Sec.~\ref{secpbs},
we present the relation between Block Sets and the Gram matrix of
output states.  In Sec.~\ref{S5}, %
we show a linear condition for quantum exact algorithms and obtain a model that characterizes the computing power of a family of QQM algorithms. In Sec.~\ref{S6}, we prove a lower-bound for exact quantum query algorithms. In Sec.~\ref{S7}, we present our conclusion and discuss
potential extensions of this approach. At last, in the appendices, we
present examples.

\section{Preliminaries}
\label{S1}

Let $H$ be a finite Hilbert space and let $T$ be a finite set. Two
operators $A$ and $B$ are orthogonal if $\bracket{ \Psi }{ A^\dagger
  B}{ \Phi } =0$ for all $\ket{\Phi}, \ket{ \Psi } \in H$.  A Complete
Set of Orthogonal Projectors (CSOP) is an indexed set of pairwise
orthogonal projectors $\left\{ P_{z}:z\in T\right\}$, satisfying
\begin{equation}
  \sum_{z\in T}P_{z}=I_{H},
\end{equation}
where $I_{H}$ is the identity operator on $H$. We denote $H_{z}^{P}$ and $H_{z}^{Q}$, as the ranges of spaces projected 
by $P_{z}$ and $Q_{z}$ respectively.

\begin{lemma}
  \label{lem1}
  If $\left\{ P_{z}:z\in T\right\}$ is a CSOP and $U$ is a unitary
  operator, then $$\left\{ U^{\dagger}P_{z}U:z\in T\right\}$$ is a CSOP.
\end{lemma}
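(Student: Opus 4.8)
The plan is to verify directly the three defining properties of a CSOP for the family $\{U^{\dagger}P_{z}U : z\in T\}$: that each operator is a projector, that distinct ones are orthogonal in the sense defined above, and that together they sum to $I_{H}$. Throughout I would use only the unitarity relations $U^{\dagger}U = UU^{\dagger} = I_{H}$ together with the standing hypotheses that each $P_{z}$ is Hermitian, satisfies $P_{z}^{2}=P_{z}$, and obeys $P_{z}P_{z'} = 0$ for $z\neq z'$. Write $Q_{z} := U^{\dagger}P_{z}U$ for brevity.

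First I would check that $Q_{z}$ is a projector. Hermiticity is immediate: $Q_{z}^{\dagger} = U^{\dagger}P_{z}^{\dagger}U = U^{\dagger}P_{z}U = Q_{z}$. Idempotence follows by inserting $UU^{\dagger}=I_{H}$ in the middle: $Q_{z}^{2} = U^{\dagger}P_{z}(UU^{\dagger})P_{z}U = U^{\dagger}P_{z}^{2}U = U^{\dagger}P_{z}U = Q_{z}$. Next, for $z\neq z'$ I would establish orthogonality according to the definition given, namely that $\bracket{\Psi}{Q_{z}^{\dagger}Q_{z'}}{\Phi}=0$ for all $\ket{\Phi},\ket{\Psi}\in H$; since $Q_{z}$ is Hermitian this reduces to $Q_{z}Q_{z'}=0$, and again inserting $UU^{\dagger}=I_{H}$ gives $Q_{z}Q_{z'} = U^{\dagger}P_{z}(UU^{\dagger})P_{z'}U = U^{\dagger}(P_{z}P_{z'})U = U^{\dagger}\,0\,U = 0$. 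Finally, completeness follows by linearity together with $\sum_{z\in T}P_{z}=I_{H}$ and $U^{\dagger}U=I_{H}$: $\sum_{z\in T}Q_{z} = U^{\dagger}\bigl(\sum_{z\in T}P_{z}\bigr)U = U^{\dagger}I_{H}U = I_{H}$.

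There is no genuine obstacle in this argument; it is essentially bookkeeping. The only points that merit a little care are to match the paper's definition of orthogonal operators (phrased through the matrix elements of $A^{\dagger}B$) rather than merely asserting $Q_{z}Q_{z'}=0$, and to apply the correct one of $UU^{\dagger}=I_{H}$ or $U^{\dagger}U=I_{H}$ at each step: the idempotence and orthogonality computations require the insertion $UU^{\dagger}=I_{H}$ between the two copies of the projector, while the completeness computation relies on the outer cancellation $U^{\dagger}U=I_{H}$.
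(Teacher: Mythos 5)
Your proof is correct and complete: the paper states Lemma~\ref{lem1} without proof, treating it as standard, and your verification of the three CSOP properties (Hermitian idempotence, pairwise orthogonality in the sense of $Q_z^\dagger Q_{z'}=0$, and summation to $I_H$) by inserting $UU^\dagger=I_H$ or using $U^\dagger U=I_H$ as appropriate is exactly the argument being implicitly invoked. Nothing is missing.
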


\begin{lemma}
  \label{lem2}
  Let $\left\{ P_{z}:z\in T\right\}$ and $\left\{Q_{z}:z\in T\right\}$
  be two CSOP  over  $H$ such that
$\dim\left(H_{z}^{P}\right)=\dim\left(H_{z}^{Q}\right)$  for  the same
  $z$.  Then, there exists a unitary operator $U$ such that
  $U^{\dagger}P_{z}U=Q_{z}$ for all $z\in T$.
\end{lemma}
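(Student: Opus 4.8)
The plan is to construct the unitary $U$ explicitly by matching orthonormal bases of corresponding summands. First I would record the structural fact that a CSOP $\left\{P_{z}:z\in T\right\}$ induces an orthogonal direct sum decomposition $H=\bigoplus_{z\in T}H_{z}^{P}$: pairwise orthogonality of the projectors forces the ranges $H_{z}^{P}$ to be mutually orthogonal, and the completeness relation $\sum_{z\in T}P_{z}=I_{H}$ forces them to span $H$. The same holds for $\left\{Q_{z}:z\in T\right\}$ with ranges $H_{z}^{Q}$.

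Next, for each $z\in T$ set $d_{z}:=\dim\left(H_{z}^{P}\right)=\dim\left(H_{z}^{Q}\right)$ and choose an orthonormal basis $\left\{\ket{p_{z,i}}\right\}_{i=1}^{d_{z}}$ of $H_{z}^{P}$ together with an orthonormal basis $\left\{\ket{q_{z,i}}\right\}_{i=1}^{d_{z}}$ of $H_{z}^{Q}$. By the decomposition just described, the families $\left\{\ket{p_{z,i}}\right\}_{z,i}$ and $\left\{\ket{q_{z,i}}\right\}_{z,i}$ are each orthonormal bases of the whole space $H$. Define
\[
  U:=\sum_{z\in T}\sum_{i=1}^{d_{z}}\ket{p_{z,i}}\bra{q_{z,i}}.
\]
From orthonormality one checks directly that $U^{\dagger}U=UU^{\dagger}=I_{H}$, so $U$ is unitary, and moreover $U\ket{q_{z,i}}=\ket{p_{z,i}}$, equivalently $U^{\dagger}\ket{p_{z,i}}=\ket{q_{z,i}}$.

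Finally I would verify the intertwining relation. Writing $P_{z}=\sum_{i=1}^{d_{z}}\ket{p_{z,i}}\bra{p_{z,i}}$ and conjugating,
\[
  U^{\dagger}P_{z}U=\sum_{i=1}^{d_{z}}\bigl(U^{\dagger}\ket{p_{z,i}}\bigr)\bigl(\bra{p_{z,i}}U\bigr)=\sum_{i=1}^{d_{z}}\ket{q_{z,i}}\bra{q_{z,i}}=Q_{z}
\]
for every $z\in T$, which is exactly the claim. The argument is entirely constructive, so there is no genuine obstacle; the only point needing a little care is the bookkeeping showing that the two families of basis vectors, assembled across all $z\in T$, really are bases of $H$ — this is precisely where the CSOP hypotheses (pairwise orthogonality and completeness) enter, and where the equal-dimension hypothesis makes the per-block identification $i\mapsto i$ well defined. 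Finiteness of $T$ and $H$ means all sums are finite, so no convergence issues arise.
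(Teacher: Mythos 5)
Your construction is correct: the two CSOP hypotheses do give the orthogonal decompositions $H=\bigoplus_{z}H_{z}^{P}=\bigoplus_{z}H_{z}^{Q}$, the equal-dimension hypothesis lets you pair off orthonormal bases block by block, the resulting $U$ maps one orthonormal basis of $H$ to another and is therefore unitary, and the conjugation computation yields $U^{\dagger}P_{z}U=Q_{z}$ exactly as you write. For comparison, the paper states Lemma~\ref{lem2} without any proof, treating it as a standard linear-algebra fact, so there is no argument of the authors' to measure yours against; the basis-matching construction you give is the canonical one and would serve as the omitted proof.
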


The Quantum Query Model (QQM) is a formulation that
  simplifies the analysis of quantum algorithms for computing a function $f:\{0,1\}^n \rightarrow \{0,1\}$ for an input
  $x$ making \textit{queries} to the values $x_i$. In this model, we
  are mostly concerned with the number of queries to the
  input.
In the QQM, the memory can be divided in two registers: (i) the
\textit{query register}, whose size should allow it to represent any
integer $i\in\left\{ 0,..,n\right\}$, for an input of size $n$; and
(ii) the \textit{working memory}, without size constraints.

The query register and the working memory are jointly known as the
\textit{accessible memory}. The computational basis for the associated
Hilbert space $H_{A}$ (or, \textit{accessible space}) is composed by
the vectors $\ket{i,w}$ where $i\in\left\{ 0,\ldots,n\right\}$ and $w$
is a possible state
of the working memory. Thus, we can define the Hilbert spaces
associated to each register: (i) the \textit{query space} $H_{Q}$ is
spanned by vectors $\{ \ket{i} : 0\leq i \leq n \}$; (ii) the
\textit{work space} $H_{W}$ is spanned by vectors $\ket{w}$, where $w$
is in the set of allowed values for the working memory. Hence, $H_{A}
= H_{Q} \otimes H_{W}$. If $\ket{\Psi} \in H_{A}$, then it can be
written uniquely as
\begin{equation}
  \ket{\Psi} =\sum_{i=0}^{n}\ket{i} \ket{\Psi_{i}}, 
\end{equation} where $\ket{\Psi_{i}} \in H_{W}.$
The oracle operator $O_{x}$ for some input $x\in\left\{ 0,1\right\} ^{n}$ is defined as
\begin{equation}
\label{oper}
  O_{x}\ket{i} \ket{\Psi_{i}} =\left(-1\right)^{x_{i}}\ket{i} \ket{\Psi_{i}},
\end{equation} where query space has dimension $n+1$ and $x_{0}=0$ is not considered part of the input.  
In our setting it is very important to define $x_{0}=0$, otherwise we could not compute a wide range of functions. However, it can be avoided in other equivalent descriptions of QQM, by using a slightly different definition of the oracle operator~\cite{KAYE}. 

A quantum query algorithm with an output domain $T$ is determined by:
(i) the number $b$ of qubits in the working memory; (ii) a sequence of
unitary operators $\left\{ U_{i}:0\leq i\leq t\right\}$ in $H_{A}$;
and (iii) a CSOP over $H_{A}$ with projectors indexed by elements of
$T$ for the final measurement.

The execution of the algorithm for input $x$ produces a final
state \begin{equation}\left|\Psi^{f}_{x}\right\rangle =
  U_{t}O_{x}U_{t-1}\ldots
  U_{1}O_{x}U_{0}\left|0,0\right\rangle.\end{equation} The number of
queries is defined as the number of times $O_{x}$ occurs in the
execution.  The output $z\in T$ is chosen with a probability
$\pi_{x}\left(z\right)=\left\Vert P_{z}\left|\Psi_{x}^{f}\right\rangle
\right\Vert ^{2}$, using the CSOP.

We say that an algorithm computes a function $f:\left\{ 0,1\right\}
^{n}\rightarrow T$ within error $\varepsilon$ if for all input $x$,
there is $\pi_{x}\left(f\left(x\right)\right) \geq 1-\varepsilon$.
An algorithm is \textit{exact} if $\varepsilon=0$ and is
\textit{bounded-error} if $\varepsilon \leq 1/3$.  

\section{A reformulation of the Quantum Query Model}
\label{S2}

First, we need to introduce a
sequence of unitary operators
$$\widetilde{U}_{k}=U_{k}U_{k-1}\ldots U_{0}$$ for $t\geq k$. Let $\left\{
  P_{k}:0\leq k\leq n\right\}$ be a CSOP where each $P_{i}$ has as
range the subspace of vectors of the form~$\left|i\right\rangle
\left|\psi\right\rangle$, where $\left|i\right\rangle \in H_{Q}$ and
$\left|\psi\right\rangle \in H_{W}$. Finally, we introduce the notation
$\widetilde{P}_{i}^{j}=\widetilde{U}_{j}^{\dagger}P_{i}\widetilde{U}_{j}$. From
Lemma~\ref{lem1} we know that $\left\{ \widetilde{P}_{k}^{j}:0\leq
  k\leq n\right\}$ is also a CSOP for any fixed $j$.

We denote an \textit{algorithm} (without the measurement step) by the $7$-tuple
\[ \mathcal{A} = \left(t,n,m,H_{Q},H_{W},\Psi,\left\{ U_{i}\right\}
\right), \] where $\dim\left(H_{Q}\right)=n+1$,
$\dim\left(H_{W}\right)=m$ and $\ket{ \Psi } \in H_{A}$ is a unit
vector and the unitary operators in $\left\{ U_i : 0 \leq i \leq t+1
\right\}$ are defined on $H_{A}$. In the present work, we always
consider algorithms according to the above definition, unless otherwise stated.

This is all the information required for describing an algorithm using
$t+1$ queries and initial state $\left|\Psi\right\rangle $.
Choosing an arbitrary initial state $\left|\Psi\right\rangle$ is the
same as using $\left|0,0\right\rangle$, but we change the convention
because is more convenient for upcoming notations.

\begin{definition}
  \label{debs}
  Let $a=\left(a_{0},a_{1},\ldots,a_{t}\right)$ be a vector and let
  $\mathbb{Z}_{n+1}=\left\{ 0,1,\ldots,n\right\}$. We say that an
  indexed set of vectors $\left\{
    \left|\Psi\left(k\right)\right\rangle \right.$$\in
  H_{A}:$$\left.k\in \mathbb{Z}_{n+1}^{t+1} \right\}$ \emph{is
    associated with}
  \begin{equation*}\mathcal{A}=\left(t,n,m,H_{Q},H_{W},\Psi,\left\{
        U_{i}\right\} \right)\end{equation*} if we have that
  \begin{equation}\label{ddesc}
    \left|\Psi\left(a\right)\right\rangle =\widetilde{P}_{a_{t}}^{t}\ldots\widetilde{P}_{a_{1}}^{1}\widetilde{P}_{a_{0}}^{0}\left|\Psi\right\rangle,
  \end{equation} where $0\leq a_{i}\leq n$ for all $i$.
\end{definition}

The motivation for this definition is better understood by
considering the following equation. %Eq.~(\ref{decc}). 
Notice that, by using the operators defined above, we have
the 
expression
\begin{align}
  \label{decc}
  \left|\Psi\right\rangle &=\left(\sum_{k_{t}=0}^{n}\widetilde{P}_{k_{t}}^{t}\right)\ldots\left(\sum_{k_{0}=0}^{n}\widetilde{P}_{k_{0}}^{0}\right)\left|\Psi\right\rangle\nonumber\\ &=\sum_{k_{t}=0}^{n}\ldots\sum_{k_{0}=0}^{n}\left|\Psi\left(k_{0},\ldots,k_{t}\right)\right\rangle. 
\end{align}
for any vector.
If $\left|\Psi\right\rangle$ is a unit vector then the set $\left\{
  \left|\Psi\left(k_{0},\ldots,k_{t}\right)\right\rangle \right\}$ is
associated with some algorithm $\mathcal{A}$, whose initial state is
$\left|\Psi\right\rangle$ and has $t+1$ queries. 
Thus, we can interpret
Eq.~(\ref{decc}) as a decomposition of an initial state. 

This
decomposition has an important property that will be given by Theorem~\ref{desc1}.
However, we need to introduce a useful identity first.  Since $$\left\{
  \widetilde{P}_{i}^{j}:0\leq i\leq n\right\}$$ is a CSOP for each $j$
and
\begin{equation}
  \label{eq}
  O_{x}\left|\Psi\right\rangle =\sum_{i\in\left\{ k:x_{k}=0\right\} }P_{i}\left|\Psi\right\rangle -\sum_{i\in\left\{ k:x_{k}=1\right\} }P_{i}\left|\Psi\right\rangle,
\end{equation} 
we get
\begin{equation}
  \label{eqx}
  \widetilde{U}_{j}^{\dagger}O_{x}\widetilde{U}_{j}\left|\Psi\right\rangle =\sum_{i\in\left\{ k:x_{k}=0\right\} }\widetilde{U}_{j}^{\dagger}P_{i}\widetilde{U}_{j}\left|\Psi\right\rangle -\sum_{i\in\left\{ k:x_{k}=1\right\} }\widetilde{U}_{j}^{\dagger}P_{i}\widetilde{U}_{j}\left|\Psi\right\rangle.
\end{equation}

\begin{theorem}
\label{desc1}
If 
the indexed vector $\left\{ \left|\Psi\left(k\right)\right\rangle \in
  H_{A}:k\in \mathbb{Z}_{n+1}^{t+1} \right\}$ is associated with the
algorithm $\mathcal{A}=\left(t,n,m,H_{Q},H_{W},\Psi,\left\{
    U_{i}\right\} \right),$ then
\begin{equation}
  \label{desc}
\widetilde{U}_{t}^{\dagger}O_{x}U_{t}\ldots U_{1}O_{x}U_{0}\left|\Psi\right\rangle =\sum_{k_{t}=0}^{n}\ldots\sum_{k_{0}=0}^{n}(-1)^{\sum_{i=0}^{t} x_{k_{i}} }\left|\Psi\left(k_{0},\ldots,k_{t}\right)\right\rangle.
\end{equation}
\end{theorem}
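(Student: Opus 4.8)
The plan is to rewrite the left-hand side, which is the string $\widetilde{U}_t^\dagger O_x U_t O_x U_{t-1}\cdots O_x U_1 O_x U_0\ket{\Psi}$ (the factors hidden by the dots being the oracle calls $O_x$ inserted between consecutive $U_j$'s, so that there are $t+1$ copies of $O_x$ in all), as a product of conjugated oracles, and then expand it. The first step is to establish the telescoping identity
\[
\widetilde{U}_t^\dagger O_x U_t O_x U_{t-1}\cdots O_x U_1 O_x U_0
=\bigl(\widetilde{U}_t^\dagger O_x \widetilde{U}_t\bigr)\bigl(\widetilde{U}_{t-1}^\dagger O_x \widetilde{U}_{t-1}\bigr)\cdots\bigl(\widetilde{U}_0^\dagger O_x \widetilde{U}_0\bigr).
\]
This holds because between the $j$-th and $(j-1)$-th factors on the right the inner unitaries collapse, $\widetilde{U}_j\widetilde{U}_{j-1}^\dagger=U_jU_{j-1}\cdots U_0U_0^\dagger\cdots U_{j-1}^\dagger=U_j$, while the leftmost $\widetilde{U}_t^\dagger$ and the rightmost $\widetilde{U}_0=U_0$ are already in place. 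I would make this precise by a short induction on $t$, writing out the $t=1$ case explicitly; this index/ordering bookkeeping is the one spot where care is needed.

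The second step is to expand each conjugated oracle in terms of the CSOP projectors $\widetilde{P}_i^j=\widetilde{U}_j^\dagger P_i\widetilde{U}_j$. Since $\{P_i:0\le i\le n\}$ is a CSOP and, by the oracle definition in Eq.~(\ref{oper}), $O_x$ acts as multiplication by $(-1)^{x_i}$ on the range of $P_i$, we have the operator identity $O_x=\sum_{i=0}^n(-1)^{x_i}P_i$; conjugating by $\widetilde{U}_j$ yields $\widetilde{U}_j^\dagger O_x\widetilde{U}_j=\sum_{i=0}^n(-1)^{x_i}\widetilde{P}_i^j$. This is exactly Eq.~(\ref{eqx}) read as an identity of operators on all of $H_A$ rather than as a statement about $\ket{\Psi}$ alone, and it is this stronger reading that lets the factors be chained together.

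The final step is to substitute the expansion into the telescoped product applied to $\ket{\Psi}$ and distribute the finite sums:
\[
\Bigl(\sum_{k_t=0}^n(-1)^{x_{k_t}}\widetilde{P}_{k_t}^t\Bigr)\cdots\Bigl(\sum_{k_0=0}^n(-1)^{x_{k_0}}\widetilde{P}_{k_0}^0\Bigr)\ket{\Psi}
=\sum_{k_t=0}^n\cdots\sum_{k_0=0}^n(-1)^{\sum_{i=0}^t x_{k_i}}\,\widetilde{P}_{k_t}^t\cdots\widetilde{P}_{k_1}^1\widetilde{P}_{k_0}^0\ket{\Psi},
\]
where the global sign collects as $(-1)^{\sum_{i=0}^t x_{k_i}}$. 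By Definition~\ref{debs} the operator string $\widetilde{P}_{k_t}^t\cdots\widetilde{P}_{k_0}^0\ket{\Psi}$ equals $\ket{\Psi(k_0,\ldots,k_t)}$, which gives the claimed formula. I expect no genuine obstacle past the telescoping identity: once the product is in conjugated form, distributing the sums, extracting the phase, and recognizing the associated vectors are all immediate.
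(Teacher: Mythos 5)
Your proof is correct and is essentially the paper's argument in repackaged form: the paper proves Eq.~(\ref{desc}) by induction on $t$, applying Eq.~(\ref{eqx}) at each step, which is exactly your telescoping of the conjugated oracles $\widetilde{U}_j^\dagger O_x\widetilde{U}_j=\sum_i(-1)^{x_i}\widetilde{P}_i^j$ followed by distributing the sums. The telescoping identity, the operator reading of Eq.~(\ref{eqx}), and the final appeal to Definition~\ref{debs} are all sound, so no gap remains.
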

\begin{proof}
We shall prove by induction on $t$.  First, as we state our induction
hypothesis, notice that Eq.~\eqref{desc} holds when $t=0$:
\begin{align}
  \widetilde{U}_0^\dagger O_x U_0 \ket{\Psi} &= U_0^\dagger O_x U_0 \ket{ \Psi } \nonumber\\
  &= \sum_{k_0 = 0}^n
  (-1)^{ x_{k_{0}} }\left|\Psi\left(k_{0}\right)\right\rangle.
\end{align}

Then, notice that if Eq.~\eqref{desc} holds for a particular $t$, then
it must hold for $t+1$. That is, if
\[\widetilde{U}_{t}^{\dagger}O_{x} \ldots
O_{x}U_{0}\left|\Psi\right\rangle =\sum_{k_{t}=0}^{n} \ldots
\sum_{k_{0}=0}^{n}(-1)^{\sum_{i=0}^{t} x_{k_{i}} }\left|\Psi\left(k_{0},
    \ldots ,k_{t}\right)\right\rangle\] then, using the
Eq.~\eqref{eqx}, we have
that 
\begin{multline*}
  \widetilde{U}_{t+1}^\dagger O_x \ldots O_x U_0 \ket{\Psi} =
  \sum_{k_{t}=0}^{n} \ldots
  \sum_{k_{0}=0}^{n}(-1)^{\sum_{i=0}^{t} x_{k_{i}} }
  \sum_{k_{t+1}=0}^{n}(-1)^{x_{k_{t+1}}}\widetilde{P}_{k_{t+1}}^{t+1}\left|\Psi\left(k_{0},
      \ldots ,k_{t}\right)\right\rangle.
\end{multline*}
Reordering the summations, we have
\[\widetilde{U}_{t+1}^{\dagger}O_{x}\ldots
O_{x}U_{0}\left|\Psi\right\rangle=\sum_{k_{t+1}=0}^{n} \ldots
\sum_{k_{0}=0}^{n}(-1)^{\sum_{i=0}^{t+1} x_{k_{i}} }\widetilde{P}_{k_{t+1}}^{t+1}\left|\Psi\left(k_{0},\ldots,k_{t}\right)\right\rangle.\]
According to Definition~\ref{debs} and observing the notation for $\widetilde{P}_{i}^{j}$, we finally
have 
\[\widetilde{U}_{t+1}^{\dagger}O_{x} \ldots
O_{x}U_{0}\left|\Psi\right\rangle=\sum_{k_{t+1}=0}^{n} \ldots
\sum_{k_{0}=0}^{n}(-1)^{\sum_{i=0}^{t+1} x_{k_{i}} }\left|\Psi\left(k_{0},
    \ldots ,k_{t+1}\right)\right\rangle.\]
\end{proof}

\begin{corollary} \label{desc2} Consider the vectors
  $\left|\bar{\Psi}\left(k_{0}, \ldots ,k_{t}\right)\right\rangle
  =U_{t+1}\widetilde{U}_{t}\left|\Psi\left(k_{0}, \ldots
      ,k_{t}\right)\right\rangle$ $\forall k_{i}\in\mathbb{Z}_{n+1}$.
  If $\left\{ \left|\Psi\left(k\right)\right\rangle \in H_{A}:k\in
    \mathbb{Z}_{n+1}^{t+1}\right\}$ is associated with $$
  \mathcal{A}=\left(t,n,m,H_{Q},H_{W},\Psi,\left\{ U_{i}\right\}
  \right)$$ then
  \begin{equation}
    U_{t+1}O_{x}U_{t}\ldots U_{1}O_{x}U_{0}\left|\Psi\right\rangle =\sum_{k_{t}=0}^{n}\ldots\sum_{k_{0}=0}^{n}\left(-1\right)^{\sum_{i=0}^{t} x_{k_{i}} }\left|\bar{\Psi}\left(k_{0},\ldots,k_{t}\right)\right\rangle.
  \end{equation}
\end{corollary}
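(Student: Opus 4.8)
The plan is to derive the statement as an immediate consequence of Theorem~\ref{desc1} by left-multiplying both sides of Eq.~\eqref{desc} by the unitary operator $U_{t+1}\widetilde{U}_{t}$.

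First I would recall that $\widetilde{U}_{t}=U_{t}U_{t-1}\cdots U_{0}$ is a product of unitary operators and hence is itself unitary, so that $\widetilde{U}_{t}\widetilde{U}_{t}^{\dagger}=I_{H_{A}}$. Applying $U_{t+1}\widetilde{U}_{t}$ to the left-hand side of Eq.~\eqref{desc} therefore collapses the leading factor $\widetilde{U}_{t}\widetilde{U}_{t}^{\dagger}$ and leaves exactly $U_{t+1}O_{x}U_{t}\cdots U_{1}O_{x}U_{0}\ket{\Psi}$, which is the left-hand side of the claimed identity.

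Next, since $U_{t+1}\widetilde{U}_{t}$ is linear, applying it to the right-hand side of Eq.~\eqref{desc} moves it inside the finite double sum and turns each summand $(-1)^{\sum_{i}x_{k_{i}}}\ket{\Psi(k_{0},\ldots,k_{t})}$ into $(-1)^{\sum_{i}x_{k_{i}}}\,U_{t+1}\widetilde{U}_{t}\ket{\Psi(k_{0},\ldots,k_{t})}$. By the definition of $\ket{\bar{\Psi}(k_{0},\ldots,k_{t})}$ given in the statement, this is precisely $(-1)^{\sum_{i}x_{k_{i}}}\ket{\bar{\Psi}(k_{0},\ldots,k_{t})}$, so the right-hand side becomes the sum appearing in the corollary. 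Combining the two computations yields the result.

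There is no genuine obstacle here: the only point that has to be checked is that $\widetilde{U}_{t}$ is invertible with inverse $\widetilde{U}_{t}^{\dagger}$, which is immediate from unitarity, and the rest is linearity together with the stated definitions. Conceptually the corollary just undoes the conjugation by $\widetilde{U}_{t}$ that was introduced for notational convenience in Theorem~\ref{desc1} and reinstates the final operator $U_{t+1}$, so that the decomposition is expressed directly in terms of the physical output state $U_{t+1}O_{x}U_{t}\cdots O_{x}U_{0}\ket{\Psi}$ rather than its rotated version.
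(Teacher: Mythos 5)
Your proof is correct and is precisely the immediate derivation the paper leaves implicit (it states the corollary without proof): left-multiplying Eq.~\eqref{desc} by the unitary $U_{t+1}\widetilde{U}_{t}$, cancelling $\widetilde{U}_{t}\widetilde{U}_{t}^{\dagger}$ on the left-hand side, and pushing the operator through the sum by linearity. No issues.
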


Corollary~\ref{desc2} shows that any quantum algorithm can be
represented as a sum of invariant vectors, whose signs are changed
depending on the input. In 
Fig.~\ref{im2}, we can see an example by means of a graphical representation.

Notice that the algorithm of
Corollary~\ref{desc2} is equivalent to the algorithm of
Theorem~\ref{desc1}, because both algorithms have the same Gram matrices for the final
states $\left|\Psi_{x}^{f}\right\rangle$. So we can ignore the last
unitary operator $U_{t+1}$ and conclude that the algorithm is
determined by the vectors that appear in the decomposition, see
Eq.~(\ref{decc}). The following definition and theorems show us that
we can reformulate the QQM by using this decomposition instead of
unitary operators.

\begin{definition}[Block Set]
  \label{block}
  Let $n,t\geq0$. We say that an indexed set $\left\{
    \left|\Psi\left(k\right)\right\rangle \right.$$\in H_{1}\otimes
  H_{2}:$$\left.k\in \mathbb{Z}_{n+1}^{t+1} \right\}$ is a \emph{Block
    Set} for the ordered pair of Hilbert spaces
  $\left(H_{1},H_{2}\right)$, if:

  \begin{itemize}
  \item $\left\langle \Psi^{i}\left(b_{0}, \ldots
        ,b_{t-i}\right)\mid\Psi^{i}\left(c_{0}, \ldots
        ,c_{t-i}\right)\right\rangle=0$ if $b_{t-i}\neq c_{t-i}$ for
    $0\leq i\leq t$, where 
    the vector $\left|\Psi^{i}\left(a_{0}, \ldots
        ,a_{t-i}\right)\right\rangle$ is defined
    as $$\underset{k_{1}=0}{\overset{n}{\sum}} \ldots
    \underset{k_{i}=0}{\overset{n}{\sum}}\left|\Psi\left(a_{0}, \ldots
        ,a_{t-i},k_{1}, \ldots ,k_{i}\right)\right\rangle;$$
  \item $\underset{k_{0}=0}{\overset{n}{\sum}} \ldots
    \underset{k_{t}=0}{\overset{n}{\sum}}\left\Vert
      \left|\Psi\left(k_{0}, \ldots ,k_{t}\right)\right\rangle
    \right\Vert ^{2}=1;$
   
  \item $\dim(H\left(i,j\right))\leq\dim(H_{2})$ for all $i,j$, where
    $H\left(i,j\right)$ is the Hilbert space spanned
    by vectors $\left\{ \left|\Psi^{t-i}\left(a_{0}, \ldots
          ,a_{i-1},j\right)\right\rangle : a_{k}\in\mathbb{Z}_{n+1}
    \right\};$ and,
  \item $n=\dim(H_1)-1.$
  \end{itemize}
\end{definition}

\begin{theorem}
\label{blothe}
If an indexed set of vectors $\left\{
  \left|\Psi\left(k\right)\right\rangle \in H_{A}:k\in
  \mathbb{Z}_{n+1}^{t+1}\right\}$ is \emph{associated} with an
algorithm $\mathcal{A}=\left(t,n,m,H_{Q},H_{W},\Psi,\left\{
    U_{i}\right\} \right),$ then it is a Block Set for
$\left(H_{Q},H_{W}\right)$.

\end{theorem}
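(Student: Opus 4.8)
The plan is to verify the four defining properties of a Block Set for $(H_Q,H_W)$ directly from the definition of "associated with", the whole argument hinging on one collapsing identity. First I would show that
\[
\left|\Psi^{i}(a_0,\ldots,a_{t-i})\right\rangle = \widetilde{P}_{a_{t-i}}^{t-i}\cdots\widetilde{P}_{a_1}^{1}\widetilde{P}_{a_0}^{0}\left|\Psi\right\rangle ,
\]
that is, summing out the last $i$ coordinates of $\left|\Psi(k_0,\ldots,k_t)\right\rangle$ exactly deletes the last $i$ projectors. This holds because in $\widetilde{P}_{k_i}^{t}\cdots\widetilde{P}_{k_1}^{t-i+1}\widetilde{P}_{a_{t-i}}^{t-i}\cdots\widetilde{P}_{a_0}^{0}\left|\Psi\right\rangle$ the summed indices $k_1,\ldots,k_i$ sit on the leftmost factors, and each $\left\{\widetilde{P}_{k}^{j}:0\le k\le n\right\}$ is a CSOP (Lemma~\ref{lem1}), so $\sum_{k}\widetilde{P}_{k}^{j}=I_{H_A}$ can be pulled out one factor at a time. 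Here $\left|\Psi^{i}\right\rangle$ is as in Definition~\ref{block}, and for the empty argument list it is convenient to read $\left|\Psi^{t+1}\right\rangle=\left|\Psi\right\rangle$.

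Granting this identity, the orthogonality property is immediate: if $b_{t-i}\neq c_{t-i}$ then
\[
\left\langle \Psi^{i}(b)\mid\Psi^{i}(c)\right\rangle
 = \bracket{\Psi}{(\widetilde{P}_{b_0}^{0})^{\dagger}\cdots(\widetilde{P}_{b_{t-i}}^{t-i})^{\dagger}\,\widetilde{P}_{c_{t-i}}^{t-i}\cdots\widetilde{P}_{c_0}^{0}}{\Psi},
\]
and the two middle factors give $(\widetilde{P}_{b_{t-i}}^{t-i})^{\dagger}\widetilde{P}_{c_{t-i}}^{t-i}=\widetilde{P}_{b_{t-i}}^{t-i}\widetilde{P}_{c_{t-i}}^{t-i}=0$, since a CSOP consists of pairwise orthogonal projectors. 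For the normalization property I would use the recursion $\left|\Psi^{i}(a_0,\ldots,a_{t-i})\right\rangle=\widetilde{P}_{a_{t-i}}^{t-i}\left|\Psi^{i+1}(a_0,\ldots,a_{t-i-1})\right\rangle$ together with the Pythagorean identity for a CSOP, $\sum_{k}\bigl\|\widetilde{P}_{k}^{j}\ket{\phi}\bigr\|^{2}=\bigl\|\ket{\phi}\bigr\|^{2}$. Summing the squared norms over $k_t$, then $k_{t-1}$, and so on down to $k_0$ collapses $\sum_{k_0,\ldots,k_t}\bigl\||\Psi(k_0,\ldots,k_t)\rangle\bigr\|^{2}$ to $\||\Psi\rangle\|^{2}=1$.

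For the dimension property, the identity above gives $\left|\Psi^{t-i}(a_0,\ldots,a_{i-1},j)\right\rangle=\widetilde{P}_{j}^{i}\,\widetilde{P}_{a_{i-1}}^{i-1}\cdots\widetilde{P}_{a_0}^{0}\left|\Psi\right\rangle$, which always lies in the range of $\widetilde{P}_{j}^{i}=\widetilde{U}_{i}^{\dagger}P_{j}\widetilde{U}_{i}$; hence $H(i,j)$ is contained in that range. Since $\widetilde{U}_{i}$ is unitary and $P_{j}$ projects onto the $m$-dimensional subspace of states $\ket{j}\ket{\psi}$, the range of $\widetilde{P}_{j}^{i}$ has dimension $m=\dim(H_W)$, so $\dim(H(i,j))\le\dim(H_W)$. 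Finally, $n=\dim(H_Q)-1$ holds by the definition of an algorithm. The only genuinely delicate point is the first step: keeping the order of the projectors straight so that the coordinates being summed really are the ones carried by the outermost factors; once that bookkeeping is pinned down, the remaining three checks are routine.
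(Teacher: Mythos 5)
Your proof is correct and follows essentially the same route as the paper: establish that $\left|\Psi^{i}(a_0,\ldots,a_{t-i})\right\rangle=\widetilde{P}_{a_{t-i}}^{t-i}\cdots\widetilde{P}_{a_0}^{0}\left|\Psi\right\rangle$ by collapsing the summed CSOPs, then read off orthogonality from $\widetilde{P}_{b_{t-i}}^{t-i}\widetilde{P}_{c_{t-i}}^{t-i}=0$, normalization from the Pythagorean identity applied coordinate by coordinate, and the dimension bound from $H(i,j)$ sitting inside the range of $\widetilde{U}_{i}^{\dagger}P_{j}\widetilde{U}_{i}$. You actually spell out the collapsing identity and the induction for the normalization property, which the paper only gestures at, so nothing is missing.
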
 
\begin{proof}
We divide the proof into four parts, 
each corresponding
to one of the four properties 
from the definition of a Block Set:
\begin{enumerate}
\item Since $\{ \widetilde{P}_{k}^{t-i}:0\leq k\leq n\}$ is a
  CSOP, and using the fact that
  $$\left|\Psi^{i}\left(a_{0},a_{1},\ldots,a_{t-i}\right)\right\rangle
  =\widetilde{P}_{a_{t-i}}^{t-i}\ldots\widetilde{P}_{a_{1}}^{1}\widetilde{P}_{a_{0}}^{0}\left|\Psi\right\rangle,$$
  then, whenever $b_{t-i}\neq c_{t-i}$, we have $\left\langle
    \Psi^{i}\left(b_{1},\ldots,b_{t-i}\right)\mid\Psi^{i}\left(c_{1},\ldots,c_{t-i}\right)\right\rangle
  =0$.
\item The second property is proved by using CSOP properties and mathematical
  induction.
\item The space generated by each $\{
    \left|\Psi^{i}\left(a_{0},\ldots,a_{t-i-1},j\right)\right\rangle
    :a_{k}\in\mathbb{Z}_{n+1} \}$ is the same space generated by
  $\{\widetilde{P}_{j}^{t-i}\widetilde{P}_{a_{t-i-1}}^{t-i-1}\ldots\widetilde{P}_{a_{0}}^{0}\left|\Psi\right\rangle
    :a_{k}\in\mathbb{Z}_{n+1} \}$, which is a subspace of the
  space generated by $\{
    \widetilde{U}_{t-i}^{\dagger}\left|j\right\rangle
    \left|w\right\rangle :w\in H_{W} \} $, with dimension
  $\dim\left(H_{W}\right)$.
\item Finally, the fourth property follows directly from $\dim\left(H_{Q}\right)=n+1$. 
\end{enumerate}
\end{proof}

\begin{theorem}
\label{refff}
Let $\left\{ \left|\Psi\left(k\right)\right\rangle \in H_{A}:k\in
  \mathbb{Z}_{n+1}^{t+1}\right\}$ be a Block Set for
$\left(H_{Q},H_{W}\right)$, then it is associated with some $
\mathcal{A}=\left(t,n,m,H_{Q},H_{W},\Psi,\left\{ U_{i}\right\}
\right)$.

\end{theorem}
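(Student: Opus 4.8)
The plan is to reconstruct the algorithm from the Block Set by reading Eq.~\eqref{decc} backwards. Take as initial state the full sum
\[
\ket{\Psi} \;:=\; \sum_{k_0=0}^n\cdots\sum_{k_t=0}^n\ket{\Psi(k_0,\ldots,k_t)},
\]
which is the choice forced by Eq.~\eqref{decc}. The first step is to check that $\ket{\Psi}$ is a unit vector: grouping the summands by their last index gives $\ket{\Psi}=\sum_{c}\ket{\Psi^{t}(c)}$, and the first Block Set property with $i=t$ makes these summands pairwise orthogonal, so $\lVert\Psi\rVert^{2}=\sum_{c}\lVert\Psi^{t}(c)\rVert^{2}$; applying the first property successively for $i=t-1,\ldots,0$ (at level $i$ the subvectors of $\ket{\Psi^{i+1}(\cdots)}$ are orthogonal by that instance of the property) telescopes this down to $\lVert\Psi\rVert^{2}=\sum_{k_0,\ldots,k_t}\lVert\Psi(k_0,\ldots,k_t)\rVert^{2}=1$ by the second property. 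Set $m:=\dim(H_W)$.

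Next, for each $j\in\{0,\ldots,t\}$ I would build a single unitary $\widetilde{U}_j$ on $H_A$ that maps $H(j,c)$ into $\ket{c}\otimes H_W$ for every $c\in\mathbb{Z}_{n+1}$. This is possible because the subspaces $H(j,c)$ are mutually orthogonal --- this is exactly the first Block Set property applied to $\Psi^{t-j}$, whose last free index is the one distinguishing different values of $c$ --- and each satisfies $\dim H(j,c)\le m$ by the third property, while the subspaces $\ket{c}\otimes H_W$ are mutually orthogonal, each of dimension $m$, and (by the fourth property) there are exactly $n+1=\dim H_Q$ of them, so they decompose $H_A$. Hence one sends an orthonormal basis of each $H(j,c)$ isometrically into a distinct block $\ket{c}\otimes H_W$ and extends the resulting partial isometry to a unitary $\widetilde{U}_j$ of $H_A$ (the device used in Lemma~\ref{lem2}). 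Then set $U_0:=\widetilde{U}_0$, $U_j:=\widetilde{U}_j\widetilde{U}_{j-1}^{\dagger}$ for $1\le j\le t$, and $U_{t+1}:=I_{H_A}$; these are unitary and satisfy $\widetilde{U}_k=U_kU_{k-1}\cdots U_0$, so $\mathcal{A}=(t,n,m,H_Q,H_W,\Psi,\{U_i\})$ is a legitimate algorithm.

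It then remains to verify Definition~\ref{debs}, namely $\widetilde{P}_{a_t}^{t}\cdots\widetilde{P}_{a_0}^{0}\ket{\Psi}=\ket{\Psi(a_0,\ldots,a_t)}$. I would prove by induction on $j$ the identity
\[
\widetilde{P}_{a_j}^{j}\cdots\widetilde{P}_{a_0}^{0}\ket{\Psi}=\ket{\Psi^{t-j}(a_0,\ldots,a_j)},\qquad 0\le j\le t,
\]
reading $\Psi^{t+1}()$ as $\Psi$ in the base case. For the inductive step (from $j-1$ to $j$) one writes $\ket{\Psi^{t-j+1}(a_0,\ldots,a_{j-1})}=\sum_{c}\ket{\Psi^{t-j}(a_0,\ldots,a_{j-1},c)}$, which holds by the definition of the $\Psi^{i}$; since the $c$-th summand lies in $H(j,c)$ it is sent by $\widetilde{U}_j$ into $\ket{c}\otimes H_W$, so $P_{a_j}\widetilde{U}_j\ket{\Psi^{t-j+1}(a_0,\ldots,a_{j-1})}=\widetilde{U}_j\ket{\Psi^{t-j}(a_0,\ldots,a_j)}$, and applying $\widetilde{U}_j^{\dagger}$ gives $\widetilde{P}_{a_j}^{j}\ket{\Psi^{t-j+1}(a_0,\ldots,a_{j-1})}=\ket{\Psi^{t-j}(a_0,\ldots,a_j)}$; composing with the induction hypothesis closes the step. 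Taking $j=t$ and using $\Psi^{0}(a_0,\ldots,a_t)=\Psi(a_0,\ldots,a_t)$ finishes the proof.

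There is no genuine difficulty here; the main thing to get right is the bookkeeping of indices --- invoking the first Block Set property with $i=t-j$ for the mutual orthogonality of the spaces $H(j,c)$ and with $i=t,t-1,\ldots,0$ for the norm telescoping --- together with the observation that $\widetilde{U}_j$ moves the \emph{specific} vector $\ket{\Psi^{t-j}(a_0,\ldots,a_j)}$, and not merely the subspace $H(j,a_j)$, into $\ket{a_j}\otimes H_W$; this is immediate because that vector is by definition one of the generators of $H(j,a_j)$. Note also that the $\widetilde{U}_j$ are chosen independently of one another and the basis matchings are arbitrary, so many algorithms are associated with a given Block Set, consistent with the equivalence discussion in the introduction.
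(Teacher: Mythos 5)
Your proposal is correct and follows essentially the same route as the paper: the same choice of initial state, the same telescoping norm argument via the first and second Block Set properties, and the same construction of the $\widetilde{U}_j$ by extending the mutually orthogonal subspaces $H(j,c)$ (of dimension at most $\dim H_W$ by the third property) to a CSOP matched with $\{P_c\}$ via Lemma~\ref{lem2}, then setting $U_j=\widetilde{U}_j\widetilde{U}_{j-1}^{\dagger}$. Your explicit induction verifying Eq.~(\ref{ddesc}) is a welcome addition that the paper's proof leaves implicit.
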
 
\begin{proof}
First, notice that $t$ and $n$ are trivially obtained from
$$\left\{ \ket{\Psi (k)} \in H_{Q} \otimes H_{W} : k_i \in \mathbb{Z}_{n+1} \right\}$$ and $m$ can be trivially obtained
from $H_{W}$. We still have to obtain the other elements.
For the initial state, we take 
$$\ket{\Psi}=\underset{k_{0}=0}{\overset{n}{\sum}}\ldots\underset{k_{t}=0}{\overset{n}{\sum}}\left|\Psi\left(k_{0},\ldots,k_{t}\right)\right\rangle.$$

Now, we must prove that $\left|\Psi\right\rangle$ is a unit vector. By using
$$\left|\Psi^{i}\left(a_{0},\ldots,a_{t-i}\right)\right\rangle
=\underset{j=0}{\overset{n}{\sum}}\left|\Psi^{i-1}\left(a_{0},\ldots,a_{t-i},j\right)\right\rangle$$
as well as the first item of the Block Set definition, we get
$$\left\Vert
  \left|\Psi^{i}\left(a_{0},\ldots,a_{t-i}\right)\right\rangle
\right\Vert ^{2}=\underset{j=0}{\overset{n}{\sum}}\left\Vert
  \left|\Psi^{i-1}\left(a_{0},\ldots,a_{t-i},j\right)\right\rangle
\right\Vert ^{2}.$$ 
Applying the previous equality recursively in
$\left|\Psi\right\rangle$ and using the second item of the Block Set
definition, we finally get $$\left\Vert \left|\Psi\right\rangle
\right\Vert
^{2}=\underset{k_{0}=0}{\overset{n}{\sum}}\ldots\underset{k_{t}=0}{\overset{n}{\sum}}\left\Vert
  \left|\Psi\left(k_{0},\ldots,k_{t}\right)\right\rangle \right\Vert
^{2}=1.$$
  
In the third part of the proof, we have to construct the unitary
operators of $\mathcal{A}$. Those operators are obtained by using a
construction of the CSOP sequence satisfying Eq.~(\ref{ddesc}) with
the Block Set $\left\{
  \left|\Psi\left(k\right)\right\rangle\right\}$. We define
$H_{1}^{i}=\underset{j}{\bigoplus}H\left(i,j\right)$ and an orthogonal
space $H_{2}^{i}$, such that $H_{A}=H_{1}^{i}\oplus H_{2}^{i}$. 
We have
$\dim(H(i,j)) \leq \dim\left(H_{W}\right)$
from the third property of Definition~\ref{block}. 
If
$B\left(i\right)$ is an orthogonal basis of $H_{2}^{i}$, then for each
pair $(i,j)$ we can take
$\dim\left(H_{W}\right)- \dim(H(i,j))$
linearly independent elements from $B\left(i\right)$ and
write such set as $B^{i}_{j}$. The space generated by $B^{i}_{j}$ is
represented as $\widehat{H}\left(i,j\right)$. We also define a space
$\widetilde{H}\left(i,j\right)=\widehat{H}\left(i,j\right)\oplus
H\left(i,j\right)$ with the same dimension of $H_{W}$. We take
the Hilbert spaces $\widehat{H}\left(i,j\right)$, which are pairwise orthogonal for different
$j$. This is possible because $0\leq j\leq n$ and
$\dim\left(H_{A}\right)=\left(n+1\right)\dim\left(H_{W}\right)$. Thereby
$j_{1}\neq j_{2}$ implies that $\widetilde{H}\left(i,j_{1}\right)$ and
$\widetilde{H}\left(i,j_{2}\right)$ are orthogonal. Then, for each $i$
there is a CSOP $\left\{ \widetilde{P}_{j}^{i}:0\leq j\leq n\right\}$
such that $\widetilde{H}\left(i,j\right)$ is the range of the
projector $\widetilde{P}_{j}^{i}$. From Lemma~\ref{lem2}, there is a
unitary operator $\widetilde{U}_{i}$ such that
$\widetilde{U}_{i}^{\dagger}P_{j}\widetilde{U}_{i}=\widetilde{P}_{j}^{i}$,
as the CSOP $\left\{ P_{k}:0\leq k\leq n\right\}$ was defined. Thus, we
obtain the unitary operators from $U_{0}=\widetilde{U}_{0}$ and
$U_{i}=\widetilde{U}_{i}\widetilde{U}_{i-1}^{\dagger}$ for
$i>0$.
\end{proof}

Thus, we can say that for any algorithm  there is a Block Set, and for any
Block Set there is an algorithm. The reformulation is almost complete,
except for one question: while an algorithm is associated to a unique Block Set, one
Block Set may be associated to multiple algorithms.  
The 
following theorem implies that a non-bijective relation between both models
is not a problem.

\begin{theorem}
If two different algorithms are associated to the same Block Set
$$\left\{ \left|\Psi\left(k\right)\right\rangle \in
H_{A}: k\in \mathbb{Z}_{n+1}^{t+1} \right\},$$ then they have
the same Gram matrices for their final states. 
\end{theorem}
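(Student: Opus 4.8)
The plan is to separate, within the final state of any algorithm, the contribution that depends only on the Block Set from the contribution that depends on the particular sequence of unitary operators, and then to notice that the latter amounts to a single unitary applied to the whole state. For an input $x$ set
\[
\ket{\chi_{x}}=\sum_{k_{t}=0}^{n}\cdots\sum_{k_{0}=0}^{n}(-1)^{\sum_{i=0}^{t}x_{k_{i}}}\,\ket{\Psi(k_{0},\ldots,k_{t})}.
\]
This vector is manifestly a function of the Block Set $\{\ket{\Psi(k)}\}$ and of $x$ alone; no unitary operator occurs in it.

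The next step is to invoke Corollary~\ref{desc2}. If $\mathcal{A}=(t,n,m,H_{Q},H_{W},\Psi,\{U_{i}\})$ is associated with the given Block Set, then its final state on input $x$ is
\[
\ket{\Psi^{f}_{x}}=U_{t+1}O_{x}U_{t}\cdots U_{1}O_{x}U_{0}\ket{\Psi}=\sum_{k_{t}=0}^{n}\cdots\sum_{k_{0}=0}^{n}(-1)^{\sum_{i=0}^{t}x_{k_{i}}}\,\ket{\bar\Psi(k_{0},\ldots,k_{t})},
\]
and since $\ket{\bar\Psi(k_{0},\ldots,k_{t})}=U_{t+1}\widetilde{U}_{t}\ket{\Psi(k_{0},\ldots,k_{t})}$, factoring the $x$-independent linear operator $U_{t+1}\widetilde{U}_{t}$ out of the double sum yields $\ket{\Psi^{f}_{x}}=V_{\mathcal{A}}\ket{\chi_{x}}$, where $V_{\mathcal{A}}:=U_{t+1}\widetilde{U}_{t}$ is a product of unitaries on $H_{A}$ and hence itself unitary.

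The conclusion then follows from unitary invariance of inner products. Two algorithms $\mathcal{A}$ and $\mathcal{A}'$ associated with the same Block Set determine the same $n$ and $m$ (as already observed in the proof of Theorem~\ref{refff}), so they act on the same Hilbert space $H_{A}=H_{Q}\otimes H_{W}$; by the previous step their final states on input $x$ are $V_{\mathcal{A}}\ket{\chi_{x}}$ and $V_{\mathcal{A}'}\ket{\chi_{x}}$ with $V_{\mathcal{A}},V_{\mathcal{A}'}$ unitary. Hence, for all inputs $x,y$, the $(x,y)$ entry of the Gram matrix of $\mathcal{A}$ equals $\bracket{\chi_{x}}{V_{\mathcal{A}}^{\dagger}V_{\mathcal{A}}}{\chi_{y}}=\braket{\chi_{x}}{\chi_{y}}$, and likewise the corresponding entry for $\mathcal{A}'$ equals $\bracket{\chi_{x}}{V_{\mathcal{A}'}^{\dagger}V_{\mathcal{A}'}}{\chi_{y}}=\braket{\chi_{x}}{\chi_{y}}$; since both reduce to $\braket{\chi_{x}}{\chi_{y}}$, the two Gram matrices coincide. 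The argument is essentially bookkeeping: the only step calling for attention is the factorization above---checking that $U_{t+1}\widetilde{U}_{t}$ is precisely the operator that isolates the Block-Set-dependent vector $\ket{\chi_{x}}$ in Corollary~\ref{desc2}---together with the trivial remark that the two algorithms live on the same Hilbert space so that unitary invariance may be applied. Since the substantive content is already contained in Theorem~\ref{desc1} and Corollary~\ref{desc2}, I do not anticipate a real obstacle.
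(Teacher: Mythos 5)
Your proposal is correct and follows essentially the same route as the paper: both invoke Corollary~2 to write the final state as an $x$-independent unitary ($U_{t+1}\widetilde{U}_{t}$) applied to the Block-Set-determined vector $\ket{\chi_x}$, and then conclude by unitary invariance of inner products. Your version merely makes explicit the factorization and the unitary-invariance step that the paper's proof leaves implicit.
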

\begin{proof}
As it was defined, a set $\left\{
  \left|\Psi\left(k\right)\right\rangle \in H_{A}:k\in
  \mathbb{Z}_{n+1}^{t+1}\right\}$ associated to an algorithm just
depends on the unitary operators before the last query. Suppose that
two algorithms are associated to the same set. From Corollary~\ref{desc2},
we have that the final state of the algorithms is equal to
the same linear combination of elements from the Block Set for a fixed input $x$, but
is different in the unitary operators applied over each sum. Then,
$\left\langle \Psi_{x}^{f}\right|\left.\Psi_{y}^{f}\right\rangle$ are
equal in both algorithms.
\end{proof}
\begin{definition}
  \label{decal}
  The output state of the input $x$ under a Block Set $$\left\{
    \left|\Psi\left(k\right)\right\rangle \in H_{A}:k\in
    \mathbb{Z}_{n+1}^{t+1}\right\}$$ is defined as
\begin{equation}
  \left|\Psi_{x}^{f}\right\rangle =\sum_{k_{t}=0}^{n}\ldots\sum_{k_{0}=0}^{n}\left(-1\right)^{\sum_{i=0}^{t} x_{k_{i}} }\left|\Psi\left(k_{0},\ldots,k_{t}\right)\right\rangle.
\end{equation}

\end{definition}
This definition closes our new formulation, by defining the
output states from the Block Set that describes the algorithm. Notice
that the space $H_{A}$ is maintained and the Gram matrix of final
states from such Block Set is equal to the Gram matrix for final
states of any algorithm associated to the Block Set. If we keep the same
  measurement step as in the original model, then we can compute the same functions
within the same margin of error in the associated BSF, as we would with the QQM
algorithms. In fact, it is just a matter of choosing adequate measurement
steps. In Appendix~A we present an example of a Block Set equivalent to a QQM
algorithm.

\section{Gram matrices and Block Sets}
\label{secpbs}

At this point, Block Sets are taken as an equivalent parametrization
of quantum query algorithms, where we consider the elements of a Block
Set as the new parameters. In this section, we study how each element
will affect the final Gram matrix of output states. That information
can open the possibility of using such parameters for constructing a
Gram matrix, that is appropriate for computing a given function.  If
inputs $x$ and $y$ should give different outputs for a given function,
then the quantum algorithm must be designed for making
$\left\langle \Psi_{x}^{f}|\Psi_{y}^{f}\right\rangle$ as close to zero
as possible.

It is convenient to introduce four auxiliary vectors, as
  follows.  Vector $\ket{A}$ is defined as the sum of those components
  $\ket{\Psi(a)}$ of a Block Set whose sign is kept unchanged in both
  $\ket{\Psi_x^f}$ and $\ket{\Psi_y^f}$.  Analogously, vector
  $\ket{B}$ is defined as the sum of those components $\ket{\Psi(a)}$ of a Block
  Set whose sign is kept unchanged in $\ket{\Psi_x^f}$ while inverted
  in $\ket{\Psi_y^f}$.  Vector $\ket{C}$ is defined as the sum of those
  components $\ket{\Psi(a)}$ of a Block Set whose sign is inverted in
  both $\ket{\Psi_x^f}$ and $\ket{\Psi_y^f}$.  Finally, vector
  $\ket{D}$ is defined as the sum of those components $\ket{\Psi(a)}$ of a Block
  Set whose sign is inverted in $\ket{\Psi_x^f}$ while kept unchanged
  in $\ket{\Psi_y^f}$.
Notice that
$\left|\Psi\right\rangle =\left|A\right\rangle +\left|B\right\rangle
+\left|C\right\rangle +\left|D\right\rangle$,
and
$\left|\Psi_{x}^{f}\right\rangle=\left|A\right\rangle
+\left|B\right\rangle -\left|C\right\rangle -\left|D\right\rangle$,
and
$\left|\Psi_{y}^{f}\right\rangle=\left|A\right\rangle
-\left|B\right\rangle -\left|C\right\rangle +\left|D\right\rangle$.

Expanding $\left\langle \Psi_{x}^{f}|\Psi_{y}^{f}\right\rangle$ and
$\left\langle \Psi|\Psi\right\rangle$ in terms of the above defined vectors and
summing those expressions, we get
\begin{equation}
  \label{EQU8}
  \left\langle \Psi_{x}^{f}|\Psi_{y}^{f}\right\rangle =2\left(\left\langle \Psi_{x}^{+}|\Psi_{y}^{+}\right\rangle +\left\langle \Psi_{x}^{-}|\Psi_{y}^{-}\right\rangle \right)-1,
\end{equation} where  $\left|\Psi_{x}^{\pm}\right\rangle =\frac{\pm\left|\Psi_{x}^{f}\right\rangle +\left|\Psi\right\rangle }{2}$, and analogously for $\left|\Psi_{y}^{\pm}\right\rangle$.

We say that a Block Set is real-valued if its elements are vectors on the
real numbers. 

\begin{lemma}
  \label{lemma3}
  If there is a complex Block Set
  $\left\{ \left|\Psi\left(k\right)\right\rangle \in H_{A}:k\in
    \mathbb{Z}_{n+1}^{t+1}\right\}$
  for $\left(H_{Q},H_{W}\right)$, whose output states are used for
  computing a function
  $f:\left\{ 0,1\right\} ^{n}\rightarrow\left\{ 0,1\right\}$ within
  error $\epsilon$, then there is a real Block Set
  $\left\{ \left|\widehat{\Psi}\left(k\right)\right\rangle
  \right.$$\in\widehat{H}_{Q}\otimes\widehat{H}_{W}:$$\left.k\in
    \mathbb{Z}_{n+1}^{t+1} \right\} $
  for some $\left(\widehat{H}_{Q},\widehat{H}_{W}\right)$, whose
  output spaces can be used to compute $f$ within the same error.
\end{lemma}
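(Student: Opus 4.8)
## Proof Proposal for Lemma~\ref{lemma3}

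The plan is to exploit the standard trick of simulating complex amplitudes with real ones by doubling the dimension of the working space. Concretely, I would set $\widehat{H}_{Q} = H_{Q}$ (so $n$ is unchanged) and $\widehat{H}_{W} = H_{W} \otimes \mathbb{R}^{2}$, where the extra qubit carries an encoding of the real and imaginary parts. The key map is the $\mathbb{R}$-linear embedding $\varphi$ that sends a complex vector $\ket{v} = \ket{r} + i\ket{s}$ (with $\ket{r}, \ket{s}$ real) to $\ket{r}\otimes\ket{0} + \ket{s}\otimes\ket{1}$. The first thing to check is that $\varphi$ preserves the relevant inner-product data: for complex vectors $\ket{u}, \ket{v}$, one has $\braket{\varphi(u)}{\varphi(v)} = \operatorname{Re}\braket{u}{v}$. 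In particular $\varphi$ preserves norms, and it preserves orthogonality when the complex inner product is already real — which is exactly the situation that will arise for the components that need to stay orthogonal.

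Next I would define $\ket{\widehat{\Psi}(k)} = \varphi(\ket{\Psi(k)})$ for every multi-index $k \in \mathbb{Z}_{n+1}^{t+1}$, and verify the four properties of Definition~\ref{block}. For the first property, note that $\varphi$ commutes with the partial sums defining $\ket{\Psi^{i}(\cdot)}$ (since $\varphi$ is $\mathbb{R}$-linear), so $\ket{\widehat{\Psi}^{i}(b_{0},\ldots,b_{t-i})} = \varphi(\ket{\Psi^{i}(b_{0},\ldots,b_{t-i})})$; then $\braket{\widehat{\Psi}^{i}(b)}{\widehat{\Psi}^{i}(c)} = \operatorname{Re}\braket{\Psi^{i}(b)}{\Psi^{i}(c)} = 0$ whenever $b_{t-i}\neq c_{t-i}$, since the original inner product vanishes entirely. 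The second property (the norms sum to $1$) is immediate from norm preservation. The third property holds because $\varphi$ is injective and $\mathbb{R}$-linear, so $\dim_{\mathbb{R}} \widehat{H}(i,j) = \dim_{\mathbb{R}} H(i,j) = 2\dim_{\mathbb{C}} H(i,j) \le 2\dim_{\mathbb{C}} H_{W} = \dim_{\mathbb{R}} \widehat{H}_{W}$. The fourth property, $n = \dim(\widehat{H}_{Q}) - 1$, is built into the choice $\widehat{H}_{Q} = H_{Q}$.

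It remains to show that the real Block Set computes $f$ within the same error. Since $\varphi$ is $\mathbb{R}$-linear and the output state is a real linear combination (signs $\pm 1$) of Block Set elements, we get $\ket{\widehat{\Psi}_{x}^{f}} = \varphi(\ket{\Psi_{x}^{f}})$ for every input $x$. Hence the Gram matrix of the real output states is $\braket{\widehat{\Psi}_{x}^{f}}{\widehat{\Psi}_{y}^{f}} = \operatorname{Re}\braket{\Psi_{x}^{f}}{\Psi_{y}^{f}}$. To recover the same success probabilities I would pass to the corresponding QQM algorithm (via Theorem~\ref{refff}) and invoke the classical fact that any quantum algorithm over $\mathbb{C}$ with accessible space $H_{A}$ can be converted into one over $\mathbb{R}$ with accessible space $H_{A}\otimes\mathbb{R}^{2}$ having the same acceptance probabilities, the measurement projector $P_{z}$ being replaced by $P_{z}\otimes I_{2}$. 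Then $\|(P_{z}\otimes I_{2})\ket{\widehat{\Psi}_{x}^{f}}\|^{2} = \|P_{z}\ket{\Psi_{x}^{f}}\|^{2}$, so the error parameter is unchanged.

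The main obstacle I anticipate is bookkeeping rather than conceptual: making sure that the auxiliary-index convention is consistent with the QQM register structure (the extra factor $\mathbb{R}^{2}$ should be appended to the \emph{work} register $H_{W}$, not the query register $H_{Q}$, so that the oracle $O_{x}$ acts correctly and the query count $t$ is preserved), and checking that the dimension inequality in the third Block Set property survives the doubling — which it does, precisely because a complex Hilbert space of complex-dimension $d$ has real-dimension $2d$, so both sides of $\dim H(i,j) \le \dim H_{W}$ scale by the same factor of $2$. A minor point to state carefully is that one only needs $\operatorname{Re}\braket{\Psi^{i}(b)}{\Psi^{i}(c)} = 0$ for the first property, which follows \emph{a fortiori} from the full complex orthogonality guaranteed by the complex Block Set.
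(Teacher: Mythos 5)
Your proof is correct, but it takes a genuinely different route from the paper's. The paper does not realify the Block Set directly: it first converts the complex Block Set into a QQM algorithm via Theorem~\ref{refff}, then invokes the semi-definite-programming characterization of Barnum, Saks and Szegedy together with the explicit construction of Montanaro, Jozsa and Mitchison to conclude that the unitaries and states of that algorithm may be taken to be real without changing the error, and finally reads off the real Block Set associated to the resulting real algorithm via Theorems~\ref{desc1} and~\ref{blothe}. Your argument replaces that external black box with the elementary embedding $\varphi(\ket{r}+i\ket{s})=\ket{r}\ket{0}+\ket{s}\ket{1}$ applied to the Block Set itself; it is self-contained, and it additionally makes the relation between the two Gram matrices explicit ($\widehat{G}=\operatorname{Re}G$), whereas the paper's route preserves the Gram matrix exactly but at the cost of a citation. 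Two small imprecisions in your write-up are worth fixing, though neither is a gap: (i) the real span of $\left\{ \varphi(\ket{\Psi^{t-i}(a_{0},\ldots,a_{i-1},j)})\right\}$ need not equal the realification of the complex span $H(i,j)$, so your equality $\dim\widehat{H}(i,j)=2\dim_{\mathbb{C}}H(i,j)$ should be a ``$\leq$'' --- which is all the third property requires; and (ii) the realification of a complex projector $P_{z}=A+iB$ is $A\otimes I_{2}+B\otimes J$ with $J$ the $2\times 2$ rotation by $\pi/2$, which equals $P_{z}\otimes I_{2}$ only when $P_{z}$ is real --- but since the measurement CSOP may be chosen freely and the realified CSOP still satisfies $\left\Vert \widehat{P}_{z}\varphi(\ket{v})\right\Vert ^{2}=\left\Vert P_{z}\ket{v}\right\Vert ^{2}$, the error is preserved exactly as you claim.
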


\begin{proof}
If the outputs from
$\left\{ \left|\Psi\left(k\right)\right\rangle \right\}$ can be used
for computing $f$ within error $\epsilon$ (with an appropriate CSOP),
then the existence of a quantum query algorithm that computes $f$ within error
$\epsilon$ in $t+1$ queries follows directly from Theorems~\ref{desc1}
and \ref{refff}.
Barnum et al. \cite{BARNUM} proved that
there exists a quantum algorithm that computes $f$ within error
$\epsilon$ in $t+1$ queries, if and only if a semi-definite program
$P\left(f,t+1,\epsilon\right)$ is feasible, where the unitary matrices and states in the quantum query algorithm corresponding to a solution for $P\left(f,t+1,\epsilon\right)$ can be taken to be real; Montanaro, Jozsa and Mitchison~[5] gave an explicit construction achieving this.
The set of vectors
$\left\{ \left|\widehat{\Psi}\left(k\right)\right\rangle \right\}$
associated to this algorithm has output states that produces the same
Gram matrix by Theorem~\ref{desc1}   
and all its elements are real. Finally, this set of vectors is a Block
Set according to Theorem~\ref{blothe}.
\end{proof}

According to Lemma~\ref{lemma3}, we can always assume that Block Sets are real-valued,
without loss of generality. The 
following lemma presents a useful property about this particular
case.

\begin{lemma}
  \label{prolem}
  If a Block Set
  $\left\{ \left|\Psi\left(k\right)\right\rangle \in H_{A}:k\in
    \mathbb{Z}_{n+1}^{t+1}\right\}$
  is real, then for any input $x\in\left\{ 0,1\right\} ^{n}$,
  $\left\langle \Psi_{x}^{+}\mid\Psi_{x}^{-}\right\rangle =0$.
\end{lemma}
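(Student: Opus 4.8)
The plan is to reduce the statement to a two-line inner-product expansion and then supply the one genuinely non-trivial ingredient. Writing $\left|\Psi_x^\pm\right\rangle = \tfrac12\bigl(\pm\left|\Psi_x^f\right\rangle + \left|\Psi\right\rangle\bigr)$ explicitly,
\[
\left\langle\Psi_x^+\mid\Psi_x^-\right\rangle = \tfrac14\bigl(\left\langle\Psi_x^f\right| + \left\langle\Psi\right|\bigr)\bigl(\left|\Psi\right\rangle - \left|\Psi_x^f\right\rangle\bigr),
\]
which expands to $\tfrac14\bigl(\left\langle\Psi\mid\Psi\right\rangle - \left\langle\Psi_x^f\mid\Psi_x^f\right\rangle + \left\langle\Psi_x^f\mid\Psi\right\rangle - \left\langle\Psi\mid\Psi_x^f\right\rangle\bigr)$. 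So it is enough to establish: (i) $\left\langle\Psi\mid\Psi\right\rangle = 1$; (ii) $\left\langle\Psi_x^f\mid\Psi_x^f\right\rangle = 1$; and (iii) $\left\langle\Psi_x^f\mid\Psi\right\rangle = \left\langle\Psi\mid\Psi_x^f\right\rangle$. Given these, the first two terms cancel each other and the last two cancel each other, so the whole expression vanishes.

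Item (i) is already contained in the proof of Theorem~\ref{refff}: with $\left|\Psi\right\rangle = \sum_k\left|\Psi(k)\right\rangle$, the first two items of Definition~\ref{block} force $\bigl\||\Psi\rangle\bigr\|^2 = \sum_k\bigl\||\Psi(k)\rangle\bigr\|^2 = 1$. For item (ii) I would route through the equivalence machinery: by Theorem~\ref{refff} the Block Set is associated with some algorithm $\mathcal{A} = \left(t,n,m,H_Q,H_W,\Psi,\{U_i\}\right)$, and then Theorem~\ref{desc1} identifies the output state of Definition~\ref{decal} with $\widetilde{U}_t^\dagger O_x U_t \ldots U_1 O_x U_0\left|\Psi\right\rangle$; since every operator in that product is unitary and $\left|\Psi\right\rangle$ is a unit vector by (i), $\left|\Psi_x^f\right\rangle$ is a unit vector. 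Item (iii) is where the hypothesis \emph{real-valued} enters: the coefficients $(-1)^{\sum_i x_{k_i}}$ and $1$ are real, so both $\left|\Psi_x^f\right\rangle$ and $\left|\Psi\right\rangle$ are real vectors of $H_A$, and the inner product of real vectors is symmetric.

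I expect item (ii) to be the only real obstacle. One is tempted to prove $\bigl\||\Psi_x^f\rangle\bigr\| = 1$ directly by the same telescoping-of-norms argument that gives (i), peeling off one index at a time; but the sign prefactor $(-1)^{\sum_i x_{k_i}}$ does not factor through the partial sums $\left|\Psi^i(\cdot)\right\rangle$ in a way that keeps item~1 of Definition~\ref{block} usable term by term, so that route is awkward. Going instead through Theorems~\ref{refff} and~\ref{desc1}, i.e.\ through the unitarity of the associated QQM algorithm, sidesteps this cleanly. Everything else is bookkeeping, and assembling (i)--(iii) in the displayed identity gives $\left\langle\Psi_x^+\mid\Psi_x^-\right\rangle = \tfrac14(1 - 1 + c - c) = 0$.
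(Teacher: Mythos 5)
Your proof is correct and follows essentially the same route as the paper's: expand $\left\langle\Psi_x^+\mid\Psi_x^-\right\rangle$ into the four inner products, cancel the two norm terms because both $\left|\Psi\right\rangle$ and $\left|\Psi_x^f\right\rangle$ are unit vectors, and cancel the two cross terms by symmetry of the real inner product. The only difference is that you explicitly justify $\bigl\Vert\left|\Psi_x^f\right\rangle\bigr\Vert=1$ by routing through Theorems~\ref{refff} and~\ref{desc1} (unitarity of the associated algorithm), a step the paper's proof asserts without elaboration.
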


\begin{proof}
There is
$\left\langle \Psi_{x}^{+}\mid\Psi_{x}^{-}\right\rangle
=\frac{1}{4}\left(\left\Vert \left|\Psi\right\rangle \right\Vert
  ^{2}+\left\langle \Psi_{x}^{f}\mid\Psi\right\rangle -\left\langle
    \Psi\mid\Psi_{x}^{f}\right\rangle -\left\Vert
    \left|\Psi_{x}^{f}\right\rangle \right\Vert ^{2}\right)$.

If the Block Set is real, then $\left|\Psi\right\rangle$ and
$\left|\Psi_{x}^{f}\right\rangle$ are real unit vectors, then
$\left\langle \Psi_{x}^{f}\mid\Psi\right\rangle =\left\langle
  \Psi\mid\Psi_{x}^{f}\right\rangle$,
in addition
$\left\Vert \left|\Psi\right\rangle \right\Vert =\left\Vert
  \left|\Psi_{x}^{f}\right\rangle \right\Vert =1$
implies $\left\langle \Psi_{x}^{+}\mid\Psi_{x}^{-}\right\rangle
=0$.
\end{proof}
\begin{theorem}
\label{prothe}
Let the vectors $\ket{A},\ \ket{B},\ \ket{C},$ and $\ket{D}$ be as they were defined, however with the
additional condition of being real-valued. Then we get
\begin{equation}
  \label{EQPRO}
  \left\langle \Psi_{x}^{f}\mid\Psi_{y}^{f}\right\rangle =2\left(\left\Vert \left|A\right\rangle \right\Vert ^{2}-2\left\langle A\mid C\right\rangle +\left\Vert \left|C\right\rangle \right\Vert ^{2}\right)-1.
\end{equation}
\end{theorem}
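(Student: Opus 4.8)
The plan is to start from Eq.~\eqref{EQU8}, which already reduces $\braket{\Psi_x^f}{\Psi_y^f}$ to $2\left(\braket{\Psi_x^+}{\Psi_y^+}+\braket{\Psi_x^-}{\Psi_y^-}\right)-1$, and to evaluate the bracketed sum explicitly in terms of $\ket{A},\ket{B},\ket{C},\ket{D}$. First I would substitute the three given expansions $\ket{\Psi}=\ket{A}+\ket{B}+\ket{C}+\ket{D}$, $\ket{\Psi_x^f}=\ket{A}+\ket{B}-\ket{C}-\ket{D}$ and $\ket{\Psi_y^f}=\ket{A}-\ket{B}-\ket{C}+\ket{D}$ into the definitions $\ket{\Psi_x^\pm}=\tfrac{\pm\ket{\Psi_x^f}+\ket{\Psi}}{2}$ and $\ket{\Psi_y^\pm}=\tfrac{\pm\ket{\Psi_y^f}+\ket{\Psi}}{2}$. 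The $\pm$ averages collapse: one gets $\ket{\Psi_x^+}=\ket{A}+\ket{B}$, $\ket{\Psi_x^-}=\ket{C}+\ket{D}$, $\ket{\Psi_y^+}=\ket{A}+\ket{D}$ and $\ket{\Psi_y^-}=\ket{B}+\ket{C}$.

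Next I would expand the two inner products, using that all four vectors are real so inner products are symmetric. This gives $\braket{\Psi_x^+}{\Psi_y^+}+\braket{\Psi_x^-}{\Psi_y^-}=\left\Vert\ket{A}\right\Vert^{2}+\left\Vert\ket{C}\right\Vert^{2}+\braket{A}{D}+\braket{A}{B}+\braket{B}{C}+\braket{C}{D}+2\braket{B}{D}$. The target form $\left\Vert\ket{A}\right\Vert^{2}-2\braket{A}{C}+\left\Vert\ket{C}\right\Vert^{2}$ is still missing, so what remains is to prove the identity $\braket{A}{D}+\braket{A}{B}+\braket{B}{C}+\braket{C}{D}+2\braket{B}{D}=-2\braket{A}{C}$.

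This is where Lemma~\ref{prolem} enters, and the key point is that one must apply it to \emph{both} inputs. Writing $\braket{\Psi_x^+}{\Psi_x^-}=0$ with $\ket{\Psi_x^+}=\ket{A}+\ket{B}$ and $\ket{\Psi_x^-}=\ket{C}+\ket{D}$ yields $\braket{A}{C}+\braket{A}{D}+\braket{B}{C}+\braket{B}{D}=0$; doing the same for $y$, with $\ket{\Psi_y^+}=\ket{A}+\ket{D}$ and $\ket{\Psi_y^-}=\ket{B}+\ket{C}$, yields $\braket{A}{B}+\braket{A}{C}+\braket{B}{D}+\braket{C}{D}=0$ (again using symmetry of real inner products). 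Adding these two relations gives exactly $\braket{A}{D}+\braket{A}{B}+\braket{B}{C}+\braket{C}{D}+2\braket{B}{D}=-2\braket{A}{C}$. Substituting back, $\braket{\Psi_x^+}{\Psi_y^+}+\braket{\Psi_x^-}{\Psi_y^-}=\left\Vert\ket{A}\right\Vert^{2}-2\braket{A}{C}+\left\Vert\ket{C}\right\Vert^{2}$, and Eq.~\eqref{EQU8} then delivers \eqref{EQPRO}.

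The only real subtlety—and the step worth flagging—is that a single use of Lemma~\ref{prolem} does not suffice: one must invoke it once for $x$ and once for $y$ and add the resulting orthogonality relations, after which the unwanted cross terms cancel and the entire off-diagonal contribution collapses to $-2\braket{A}{C}$. Everything else is routine bookkeeping with the four-way decomposition and the symmetry of real inner products.
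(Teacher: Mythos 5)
Your proof is correct and follows essentially the same route as the paper's: the paper likewise invokes Lemma~\ref{prolem} for both $(\ket{\Psi_x^+},\ket{\Psi_x^-})$ and $(\ket{\Psi_y^+},\ket{\Psi_y^-})$, combines the two resulting orthogonality relations with Eq.~(\ref{EQU8}), and reduces everything to dot products of $\ket{A},\ket{B},\ket{C},\ket{D}$ by elementary algebra. You have simply carried out explicitly the computation the paper only sketches, and your identifications $\ket{\Psi_x^+}=\ket{A}+\ket{B}$, $\ket{\Psi_x^-}=\ket{C}+\ket{D}$, $\ket{\Psi_y^+}=\ket{A}+\ket{D}$, $\ket{\Psi_y^-}=\ket{B}+\ket{C}$ and the final cancellation to $-2\braket{A}{C}$ all check out.
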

\begin{proof}
Using Lemma~\ref{prolem} over
$\left(\left|\Psi_{x}^{+}\right\rangle
  ,\left|\Psi_{x}^{-}\right\rangle \right)$
and
$\left(\left|\Psi_{y}^{+}\right\rangle
  ,\left|\Psi_{y}^{-}\right\rangle \right)$,
there are two equations. We consider a system of equations, joining
the two last equations with Eq.~(\ref{EQU8}). Expressing that system
in dot products of $\ket{A}$, $\ket{B}$, $\ket{C}$ and $\ket{D}$, we obtain a new system that
derives Eq.~(\ref{EQPRO}), by elementary algebra.
\end{proof}

The previous theorem give us a way of obtaining the Gram matrix of final
states directly from a given Block Set.

Let
$\mathcal{B}=\left\{ \left|\Psi\left(k\right)\right\rangle \in
  H_{A}:k\in \mathbb{Z}_{n+1}^{t+1}\right\}$
be a Block Set for
$\left(H_{Q},H_{W}\right)$.  
We denote \begin{equation*} k=\left(k_{0},k_{1},\ldots,k_{t}\right)
\end{equation*} and the following
subsets of $\mathcal{B}$:
\begin{enumerate}
\item $\mathcal{B}_{x}^{+}=$$\left\{
    \left|\Psi\left(k\right)\right\rangle \in
    H_{A}:\left(-1\right)^{\sum_{i=0}^{t} x_{k_{i}} }=1\right\}
  $.
\item
  $\mathcal{B}_{x}^{-}=$$\left\{
    \left|\Psi\left(k\right)\right\rangle \in
    H_{A}:\left(-1\right)^{\sum_{i=0}^{t} x_{k_{i}} }=-1\right\}
  $.
\end{enumerate}
Then $\widetilde{A}_{xy}=\mathcal{B}_{x}^{+}\cap \mathcal{B}_{y}^{+}$
and $\widetilde{C}_{xy}=\mathcal{B}_{x}^{-}\cap \mathcal{B}_{y}^{-}$.

Notice that $\mathcal{B}_{x}^{+}$ and $\mathcal{B}_{x}^{-}$ are the
sets of positive and negative terms in Eq.~(\ref{desc}),
respectively. So for each pair $x,y$, the sets $\widetilde{A}_{xy}$ and
$\widetilde{C}_{xy}$ contain vectors of a Block Set, whose sum define
$\left|A\right\rangle$ and~$\left|C\right\rangle $, respectively.

\begin{lemma}
  \label{thethe}
  Let
  $\mathcal{B}=\left\{ \left|\Psi\left(k\right)\right\rangle \in
    H_{A}:k\in \mathbb{Z}_{n+1}^{t+1}\right\}$
  be a Block Set for $\left(H_{Q},H_{W}\right)$, where:
  \begin{itemize}
  \item $P\left(k\right)$ is the set of pairs $\left(x,y\right)$ such
    that
    $\left|\Psi\left(k\right)\right\rangle \in\widetilde{A}_{xy}$.
  \item $Q\left(k\right)$ is the set of pairs $\left(x,y\right)$ such
    that
    $\left|\Psi\left(k\right)\right\rangle \in\widetilde{C}_{xy}$.
  \end{itemize}
  Then
  $P\left(k\right)=\left\{ x:\left(x_{k_{0}}\oplus\ldots\oplus
      x_{k_{t}}\right)=0\right\} ^{2}$
  and
  $Q\left(k\right)=\left\{ x:\left(x_{k_{0}}\oplus\ldots\oplus
      x_{k_{t}}\right)=1\right\} ^{2}$.
\end{lemma}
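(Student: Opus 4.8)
The plan is to unwind the chain of definitions for $\widetilde{A}_{xy}$, $\widetilde{C}_{xy}$, $\mathcal{B}_x^{\pm}$, $P(k)$ and $Q(k)$, treating $\mathcal{B}$ throughout as an \emph{indexed} family, so that a phrase like ``$\ket{\Psi(k)}$ belongs to a given subset'' is read as ``the term carrying index $k$ belongs to that subfamily'' (this is the only point that requires any care: distinct indices $k$ may produce equal, or zero, vectors, and we want membership questions to refer to indices). Fix $k=(k_0,k_1,\ldots,k_t)\in\mathbb{Z}_{n+1}^{t+1}$ for the whole argument.

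First I would record the elementary parity fact: for any $z\in\{0,1\}^n$, the integer $\sum_{i=0}^{t}z_{k_i}$ is even exactly when $z_{k_0}\oplus\cdots\oplus z_{k_t}=0$ and odd exactly when $z_{k_0}\oplus\cdots\oplus z_{k_t}=1$; equivalently $(-1)^{\sum_{i=0}^{t}z_{k_i}}=1$ iff $z_{k_0}\oplus\cdots\oplus z_{k_t}=0$, and $(-1)^{\sum_{i=0}^{t}z_{k_i}}=-1$ iff $z_{k_0}\oplus\cdots\oplus z_{k_t}=1$. By the definitions of $\mathcal{B}_z^{+}$ and $\mathcal{B}_z^{-}$, this says precisely that $\ket{\Psi(k)}\in\mathcal{B}_z^{+}$ iff $z\in S_0(k):=\{z:z_{k_0}\oplus\cdots\oplus z_{k_t}=0\}$, and $\ket{\Psi(k)}\in\mathcal{B}_z^{-}$ iff $z\in S_1(k):=\{z:z_{k_0}\oplus\cdots\oplus z_{k_t}=1\}$ (the convention $z_0=0$ is harmless here, since an index $k_i=0$ contributes nothing to the parity).

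Next, using $\widetilde{A}_{xy}=\mathcal{B}_x^{+}\cap\mathcal{B}_y^{+}$, the term indexed by $k$ lies in $\widetilde{A}_{xy}$ iff it lies in both $\mathcal{B}_x^{+}$ and $\mathcal{B}_y^{+}$, i.e.\ iff $x\in S_0(k)$ and $y\in S_0(k)$, i.e.\ iff $(x,y)\in S_0(k)\times S_0(k)=S_0(k)^2$. By the definition of $P(k)$ this is exactly $P(k)=S_0(k)^2=\{x:x_{k_0}\oplus\cdots\oplus x_{k_t}=0\}^2$. The same reasoning applied to $\widetilde{C}_{xy}=\mathcal{B}_x^{-}\cap\mathcal{B}_y^{-}$ with $S_1(k)$ replacing $S_0(k)$ gives $Q(k)=S_1(k)^2=\{x:x_{k_0}\oplus\cdots\oplus x_{k_t}=1\}^2$, completing the proof.

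I do not expect a genuine obstacle: the statement is a bookkeeping identity that follows by matching definitions, and the main thing worth stating explicitly is the indexed-family convention above, so that ``$\ket{\Psi(k)}\in\widetilde{A}_{xy}$'' is interpreted as a condition on the index $k$ rather than on the underlying vector.
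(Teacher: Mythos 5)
Your proposal is correct and follows essentially the same route as the paper's own proof: unwind the definitions of $\widetilde{A}_{xy}$ and $\widetilde{C}_{xy}$ to see that membership is a conjunction of a condition on $x$ and a condition on $y$ (hence a Cartesian square), then apply the parity identity $(-1)^{\sum_{i=0}^{t}x_{k_i}}=1$ iff $x_{k_0}\oplus\cdots\oplus x_{k_t}=0$. Your explicit remark about reading $\mathcal{B}$ as an indexed family is a sensible clarification the paper leaves implicit, but it does not change the argument.
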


\begin{proof}
Using the definitions of $\widetilde{A}_{xy}$ and
$\widetilde{C}_{xy}$, we have
\begin{equation}
  P\left(k\right)=\left\{ \left(x,y\right): (-1)^{\sum_{i=0}^{t} x_{k_{i}} }=1 \mbox{ and } (-1)^{\sum_{i=0}^{t} y_{k_{i}} }=1 \right\}
\end{equation}
and
\begin{equation}
  Q\left(k\right)=\left\{ \left(x,y\right):(-1)^{\sum_{i=0}^{t} x_{k_{i}} }=-1\mbox{ and }(-1)^{\sum_{i=0}^{t} y_{k_{i}} }=-1\right\}.
\end{equation}

Notice that $x$ does not have influence on the predicate of $y$, nor $y$ have influence on
the predicate of $x$. Therefore, the sets of allowed values for $x$ and $y$
form a Cartesian
product. Notice that $x_{k_{0}}\oplus\ldots\oplus
      x_{k_{t}} =0$ iff $(-1)^{\sum_{i=0}^{t} x_{k_{i}} }=1$. 
\end{proof}

Now we may define the square matrices $\bar P_{k,h}$ and $\bar Q_{k,h}$, with row $x$ and column $y$ being indexed by elements
of $\{0,1\}^n$ and with entries taking values in $\{0,1\}$, as follows:
\begin{itemize}
\item $\bar P_{k,h}[x,y]=1$ iff
  $\left(x,y\right)\in P\left(k\right)\cap
    P\left(h\right)$;
\item $\bar Q_{k,h}[x,y]=1$ iff
  $\left(x,y\right)\in Q\left(k\right)\cap
    Q\left(h\right)$;
  and,
\item $\bar R_{k,h}[x,y]=1$ iff
  $\left(x,y\right)\in P\left(k\right)\cap
    Q\left(h\right)$.
\end{itemize}

\begin{theorem}
\label{thema}
Let
$\mathcal{B}=\left\{ \left|\Psi\left(k\right)\right\rangle \in
  H_{A}:k\in \mathbb{Z}_{n+1}^{t+1}\right\}$
be a real Block Set for $\left(H_{Q},H_{W}\right)$, then the Gram
matrix of their output states
$\left\{ \left|\Psi_{x}^{f}\right\rangle \right\}$ is
\begin{equation}\label{eqsum2}
G = 2\underset{k,h}{\sum}\left(\bar P_{k,h}-2\bar R_{k,h}+\bar Q_{k,h}\right)\left\langle \Psi\left(k\right)\mid\Psi\left(h\right)\right\rangle -J,
\end{equation} 
where $J$ is a matrix where every element is equal to one.
\end{theorem}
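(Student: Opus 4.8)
The plan is to combine Theorem~\ref{prothe}, which gives the entrywise value of the Gram matrix in terms of the auxiliary vectors $\ket{A}$ and $\ket{C}$, with Lemma~\ref{thethe}, which identifies exactly which Block Set components land in $\widetilde{A}_{xy}$ and $\widetilde{C}_{xy}$. The key observation is that $\ket{A} = \ket{A_{xy}}$ and $\ket{C} = \ket{C_{xy}}$ depend on the pair $(x,y)$, so $\ket{A_{xy}} = \sum_{\ket{\Psi(k)}\in\widetilde{A}_{xy}} \ket{\Psi(k)}$ and similarly for $\ket{C_{xy}}$. Expanding the three inner products appearing in Eq.~\eqref{EQPRO} as double sums over Block Set indices,
\[
\left\Vert\ket{A_{xy}}\right\Vert^2 = \sum_{k,h}[\,\ket{\Psi(k)}\in\widetilde{A}_{xy}\,][\,\ket{\Psi(h)}\in\widetilde{A}_{xy}\,]\braket{\Psi(k)}{\Psi(h)},
\]
and likewise $\left\Vert\ket{C_{xy}}\right\Vert^2$ and $\braket{A_{xy}}{C_{xy}}$, I would recognize the bracketed indicator products as precisely the $[x,y]$ entries of $\bar P_{k,h}$, $\bar Q_{k,h}$, and $\bar R_{k,h}$ respectively. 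This is the content of the definitions of those matrices together with Lemma~\ref{thethe}.

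So the first step is to fix an arbitrary pair $(x,y)$ and write down $G[x,y] = \braket{\Psi_x^f}{\Psi_y^f}$ using Theorem~\ref{prothe}. The second step is to substitute the double-sum expansions of $\left\Vert\ket{A}\right\Vert^2$, $\braket{A}{C}$, and $\left\Vert\ket{C}\right\Vert^2$; because the Block Set is real-valued (as assumed in the hypothesis, and justified in general by Lemma~\ref{lemma3}), all inner products are real and symmetric, so no conjugation subtleties arise. The third step is to note that $[\,\ket{\Psi(k)}\in\widetilde{A}_{xy}\,] = 1$ iff $(x,y)\in P(k)$, hence the product of two such indicators equals $\bar P_{k,h}[x,y]$, and analogously $\bar Q_{k,h}[x,y]$ and $\bar R_{k,h}[x,y]$ for the other two terms — here I use Lemma~\ref{thethe} to translate membership in $\widetilde{A}_{xy}$, $\widetilde{C}_{xy}$ into the XOR conditions that define $P(k)$, $Q(k)$. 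Collecting terms gives $G[x,y] = 2\sum_{k,h}(\bar P_{k,h}[x,y] - 2\bar R_{k,h}[x,y] + \bar Q_{k,h}[x,y])\braket{\Psi(k)}{\Psi(h)} - 1$, and since this holds for every $(x,y)$ and the all-ones matrix $J$ has $J[x,y]=1$, the claimed matrix identity~\eqref{eqsum2} follows.

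I do not expect a genuine obstacle; the statement is essentially a bookkeeping reorganization of Theorem~\ref{prothe}. The one point requiring a little care is making sure the cross term is handled correctly: Eq.~\eqref{EQPRO} contains $-2\braket{A}{C}$, and $\braket{A}{C} = \sum_{k,h}[\,\ket{\Psi(k)}\in\widetilde{A}_{xy}\,][\,\ket{\Psi(h)}\in\widetilde{C}_{xy}\,]\braket{\Psi(k)}{\Psi(h)}$, which matches $\bar R_{k,h}$ as defined (with $k$ playing the role flagged by $P$ and $h$ the role flagged by $Q$); one should check that summing over all ordered pairs $(k,h)$ together with the real-symmetry $\braket{\Psi(k)}{\Psi(h)} = \braket{\Psi(h)}{\Psi(k)}$ makes the asymmetry in the definition of $\bar R_{k,h}$ immaterial, so that $\sum_{k,h}\bar R_{k,h}\braket{\Psi(k)}{\Psi(h)}$ indeed reproduces $\braket{A}{C}$. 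Everything else is elementary algebra.
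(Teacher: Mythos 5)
Your proposal is correct and is essentially the paper's own argument: the paper's proof of Theorem~\ref{thema} simply states that the result follows from Eq.~(\ref{EQPRO}) by rewriting the matrices $\bar P_{k,h}$, $\bar R_{k,h}$, $\bar Q_{k,h}$, and your entrywise expansion of $\left\Vert\ket{A}\right\Vert^2$, $\braket{A}{C}$, $\left\Vert\ket{C}\right\Vert^2$ into double sums with indicator products is exactly the bookkeeping being alluded to. Your extra care with the cross term and the role of Lemma~\ref{thethe} fills in details the paper leaves implicit, but introduces nothing different in substance.
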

\begin{proof}
Follows directly from Eq.~(\ref{EQPRO}), by rewriting the matrices
$\left\{ \bar P_{k,h}\right\} $, $\left\{ \bar R_{k,h}\right\} $ and
$\left\{ \bar Q_{k,h}\right\} $.
\end{proof}

This theorem gives an explicit expression on how pairs of elements in
a Block Set control the Gram matrix of output states. We can think of
each matrix $\bar P_{k,h}-2\bar R_{k,h}+\bar Q_{k,h}$ like acting as a
mask over the Gram matrix. Instead of this general case, there is a
simpler case computationally less powerful, however with a simpler
Gram matrix representation.

\begin{definition}
  \label{bortho}
  A Block Set
  $\mathcal{B}=\left\{ \left|\Psi\left(k\right)\right\rangle \in
    H_{A}:k\in \mathbb{Z}_{n+1}^{t+1}\right\}$
  for $\left(H_{Q},H_{W}\right)$ is orthogonal, if all its elements
  are orthogonal.
\end{definition}

\begin{corollary}
  \label{bortho2}
  Let
  $\mathcal{B}=\left\{ \left|\Psi\left(k\right)\right\rangle \in
    H_{A}:k\in \mathbb{Z}_{n+1}^{t+1}\right\}$
  be an orthogonal real Block Set for $\left(H_{Q},H_{W}\right)$, then
  the Gram matrix of their output states
  $\left\{ \left|\Psi_{x}^{f}\right\rangle \right\} $ is
\begin{equation}  \label{bortheq}
G=2 \underset{k}{\sum}\left(\bar P_{k,k}+\bar
      Q_{k,k}\right)\left\Vert \left|\Psi\left(k\right)\right\rangle
    \right\Vert ^{^{2}} -J.
\end{equation}
\end{corollary}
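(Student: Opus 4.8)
The plan is to obtain Corollary~\ref{bortho2} as a direct specialization of Theorem~\ref{thema} to orthogonal Block Sets. First I would invoke Theorem~\ref{thema}, which writes the Gram matrix of the output states as $G = 2\sum_{k,h}\bigl(\bar P_{k,h} - 2\bar R_{k,h} + \bar Q_{k,h}\bigr)\langle \Psi(k)\mid\Psi(h)\rangle - J$. Since the Block Set is orthogonal in the sense of Definition~\ref{bortho}, we have $\langle \Psi(k)\mid\Psi(h)\rangle = 0$ whenever $k\neq h$, so every off-diagonal term of the double sum drops out and only the terms with $k=h$ remain, each carrying the coefficient $\langle \Psi(k)\mid\Psi(k)\rangle = \| |\Psi(k)\rangle \|^2$.

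Next I would show that the mixed matrices $\bar R_{k,k}$ vanish identically. By definition $\bar R_{k,k}[x,y]=1$ iff $(x,y)\in P(k)\cap Q(k)$, while Lemma~\ref{thethe} gives $P(k) = \{ x : x_{k_0}\oplus\cdots\oplus x_{k_t} = 0 \}^2$ and $Q(k) = \{ x : x_{k_0}\oplus\cdots\oplus x_{k_t} = 1 \}^2$. These two sets are disjoint, since for any fixed input the parity $x_{k_0}\oplus\cdots\oplus x_{k_t}$ takes exactly one of the values $0$ or $1$; hence $P(k)\cap Q(k)=\emptyset$ and $\bar R_{k,k}$ is the zero matrix.

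Combining the two observations, the sum in Theorem~\ref{thema} collapses to $G = 2\sum_k \bigl(\bar P_{k,k} + \bar Q_{k,k}\bigr)\| |\Psi(k)\rangle \|^2 - J$, which is exactly Eq.~(\ref{bortheq}). There is no genuine obstacle here; the only point that needs a moment of care is recognizing that $\bar R_{k,k}$ disappears because $P(k)$ and $Q(k)$ are complementary parity classes, which is immediate from Lemma~\ref{thethe}.
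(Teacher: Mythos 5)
Your proposal is correct and follows essentially the same route as the paper: specialize Eq.~(\ref{eqsum2}) from Theorem~\ref{thema}, use orthogonality to discard the $k\neq h$ terms, and note that $\bar R_{k,k}=0$. The paper states $\bar R_{k,k}=0$ without comment, whereas you justify it via the disjointness of the parity classes in Lemma~\ref{thethe}, which is a welcome (if minor) addition of detail.
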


\begin{proof}
Simply applying in Eq.~(\ref{eqsum2}) that
$\left\langle \Psi\left(k\right)\mid\Psi\left(h\right)\right\rangle
=0$ for $k\neq h$ and $ \bar R_{k,k}=0$ for all $k$.
\end{proof}

In Appendix B, we apply the ideas introduced in this
  section, and show explicitly how the Block Set determines the Gram
  matrix of output states through Theorem~\ref{thema}.

\section{Towards a framework for analyzing quantum exact algorithms}
\label{S5}
In this section we introduce the BSF as tool for designing and analyzing exact quantum algorithms, this formulation implies linear systems that can admit analytic solutions. We also give examples of this application.

First, we define the set of unknowns $\left\{ w_{kh}:k,h\in \mathbb{Z}_{n+1}^{t+1} \right\}$
for the set $\mathbb{Z}_{n+1}^{t+1}$.
Let $X,Y\subset\left\{ 0,1\right\} ^{n}$ be two disjoint
sets. From this notation, we may consider some useful equations:
\begin{enumerate}
\item For each $\left(x,y\right)\in X\times Y$ there is an equation
  \begin{equation}
    \label{freq1}
    \underset{k,h\in \mathbb{Z}_{n+1}^{t+1}}{\sum}\left(\bar P_{k,h}\left[x,y\right]-2\bar R_{k,h}\left[x,y\right]+ \bar Q_{k,h}\left[x,y\right]\right)w_{kh}=\frac{1}{2}.
  \end{equation}
\item Let $\mathcal{I}_{i}\left(k'\right)=\bigl\{
  k\in  \mathbb{Z}_{n+1}^{t+1} :$
  $\left.  0\leq j\leq
      i \mbox{ and } k_{j}'=k_{j} \mbox{ for all } j  \right\}$ for each
  $k'\in\left(\mathbb{Z}_{n+1}\right) ^{i+1}$. 
  Thus, for each
  $i\in\mathbb{Z}_{t+1} $ and $k',h'\in\left(\mathbb{Z}_{n+1}\right)
  ^{i+1}$, such that $k_{i}'\neq h_{i}'$, there is an equation
  \begin{equation}
    \label{freq2}
    \underset{k\in \mathcal{I}_{i}\left(k'\right)}{\sum}\left(\underset{h\in \mathcal{I}_{i}\left(h'\right)}{\sum}w_{kh}\right)=0.
  \end{equation}
\item And, finally, there is a constraint
\begin{equation}\label{freq3}\underset{k\in
      \mathbb{Z}_{n+1}^{t+1} }{\sum}w_{kk}=1.\end{equation}
\end{enumerate}
The union of all these equations forms a system, which we denote as $E\left(t,n,X,Y\right)$.

\begin{theorem}
\label{linsys}
Let $f:\left\{ 0,1\right\} ^{n}\rightarrow\left\{0,1\right\} $ be a partial
function such that, if $x\in X$ and $y\in Y$,
then $f\left(x\right) \neq f\left(y\right)$.
Then, $f$ is
computed exactly in $t+1$ queries if and only if
$E\left(t,n,X,Y\right)$ has a real solution for $\left\{ w_{kh}:k,h\in
  \mathbb{Z}_{n+1}^{t+1} \right\}$ such that these values under the
same indices form a positive semi-definite matrix. 
\end{theorem}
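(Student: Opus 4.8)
The plan is to establish both directions of the equivalence by identifying the unknowns $w_{kh}$ with the Gram entries $\langle \Psi(k)\mid \Psi(h)\rangle$ of a real Block Set. The key observation is that Theorem~\ref{thema} already expresses the Gram matrix $G$ of output states as a linear functional of these inner products; combined with Lemma~\ref{lemma3}, which lets us pass freely to a real-valued Block Set, a function $f$ is computable exactly in $t+1$ queries if and only if there is a real Block Set whose output-state Gram matrix $G$ satisfies $\langle \Psi_x^f\mid \Psi_y^f\rangle = 0$ for every $x\in X$, $y\in Y$ (exactness of a $\{0,1\}$-valued function is equivalent to orthogonality of the two output classes, using an appropriate two-element CSOP). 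So I would first reduce the statement to: \emph{$f$ is computed exactly in $t+1$ queries iff there exists a real Block Set for some $(H_Q,H_W)$ with $G[x,y]=0$ on $X\times Y$.}

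Next I would unpack what it means for the numbers $w_{kh}$ to be ``the Gram matrix of some Block Set.'' For the forward direction, given an exact algorithm, Theorem~\ref{blothe} (with Lemma~\ref{lemma3}) produces a real Block Set; set $w_{kh}=\langle\Psi(k)\mid\Psi(h)\rangle$. These form a positive semi-definite matrix because they are literally a Gram matrix of real vectors. Equation~\eqref{freq1} is then exactly the condition $\langle\Psi_x^f\mid\Psi_y^f\rangle=0$ read off from Eq.~\eqref{EQPRO}/Theorem~\ref{thema} after dividing by $2$ and moving the $-1$; Equation~\eqref{freq2} is the translation of the first (orthogonality) property in Definition~\ref{block}, since $\langle\Psi^i(b_0,\ldots,b_{t-i})\mid\Psi^i(c_0,\ldots,c_{t-i})\rangle$ is precisely a double sum of $w_{kh}$ over $k\in\mathcal I_i(\cdot)$, $h\in\mathcal I_i(\cdot)$, and it must vanish when the last coordinates differ; Equation~\eqref{freq3} is the normalization $\sum\|\Psi(k)\|^2 = \|\Psi\|^2 = 1$ (second property of Definition~\ref{block}, which, as shown in the proof of Theorem~\ref{refff}, follows from the first two properties together with $\sum_k w_{kk}=1$). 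For the converse, given a PSD solution $\{w_{kh}\}$, factor it as a Gram matrix: write $w_{kh}=\langle\Psi(k)\mid\Psi(h)\rangle$ for real vectors $\ket{\Psi(k)}$ in some real space $\widehat H$. Equations~\eqref{freq2} and \eqref{freq3} guarantee that this indexed set satisfies the first two properties of Definition~\ref{block}; the remaining two properties (the dimension bounds on $H(i,j)$ and $n=\dim(H_1)-1$) are then arranged by choosing $H_Q,H_W$ large enough — take $H_Q$ of dimension $n+1$ and embed $\widehat H$ into $H_Q\otimes H_W$ with $\dim(H_W)$ at least as large as the rank of the full collection, so that each $H(i,j)$ automatically fits inside a copy of $H_W$. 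This yields a genuine real Block Set, and Theorem~\ref{thema} shows its output states have Gram matrix $G$ with $G[x,y]=0$ on $X\times Y$ because Eq.~\eqref{freq1} holds; by Theorem~\ref{refff} it is associated with an actual algorithm, and by choosing the CSOP that projects onto the span of $\{\ket{\Psi_x^f}: f(x)=0\}$ and its complement we obtain an exact algorithm for $f$.

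The main obstacle I anticipate is the bookkeeping in the converse direction showing that a PSD solution really does come from a \emph{Block Set} and not merely from an arbitrary indexed family of vectors — specifically, verifying that the Gram-factorization vectors can be placed inside a product space $H_Q\otimes H_W$ so that the third property ($\dim(H(i,j))\le \dim(H_W)$) of Definition~\ref{block} holds. The point is that the subspace $H(i,j)$ is spanned by the vectors $\ket{\Psi^{t-i}(a_0,\ldots,a_{i-1},j)}$, each of which is a fixed linear combination of the $\ket{\Psi(k)}$'s, so its dimension is bounded by the total number of such combinations, which is finite and controlled; one then simply takes $\dim(H_W)$ at least this large. I would also want to be careful that Equation~\eqref{freq2}, as written with the index set $\mathcal I_i(k')$, correctly encodes the orthogonality relation $\langle\Psi^i(\cdot)\mid\Psi^i(\cdot)\rangle = 0$: expanding $\ket{\Psi^i(b_0,\ldots,b_{t-i})}$ as a sum over the last $i$ coordinates and taking the inner product produces exactly $\sum_{k\in\mathcal I_i(k')}\sum_{h\in\mathcal I_i(h')} w_{kh}$ with $k',h'$ the tuples $(b_0,\ldots,b_{t-i})$ — so the indexing conventions must be matched carefully, but this is routine once the notation is aligned. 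Everything else is a direct transcription of Theorems~\ref{blothe}, \ref{refff}, and \ref{thema} together with Lemma~\ref{lemma3}.
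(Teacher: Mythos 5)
Your proposal is correct and follows essentially the same route as the paper: identify $w_{kh}$ with the Gram entries of a real Block Set (via Theorem~\ref{blothe} and Lemma~\ref{lemma3} in the forward direction, Gram factorization plus Theorem~\ref{refff} in the converse), and read Eqs.~(\ref{freq1})--(\ref{freq3}) off from Theorem~\ref{thema} and the first two properties of Definition~\ref{block}. Your explicit handling of the dimension bound $\dim(H(i,j))\le\dim(H_W)$ in the converse is in fact slightly more careful than the paper's, which simply asserts that appropriate spaces $H_1$ and $H_2$ can be chosen.
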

\begin{proof}
In the first part of the proof, if $f$ can be computed exactly
within $t+1$ queries by a quantum query algorithm $\mathcal{A}$, then
there is a set $\left\{ \left|\Psi\left(k\right)\right\rangle :k\in
  \mathbb{Z}_{n+1}^{t+1} \right\} $ that is associated to the
algorithm $\mathcal{A}$ and this set is a Block Set according to
Theorem~\ref{blothe}. 
If this Block Set is complex then according to Lemma~\ref{lemma3} there is
another real Block Set $\left\{
  \left|\widehat{\Psi}\left(k\right)\right\rangle :k\in
  \mathbb{Z}_{n+1}^{t+1} \right\} $, whose output states can be used
for computing the same function $f$ exactly.

Take $w_{k_1 k_2}=\left\langle
  \widehat{\Psi}\left(k_{1}\right)\mid\widehat{\Psi}\left(k_{2}\right)\right\rangle
$.  Since $f$ is computed exactly, if $x\in X$ and $y\in Y$, then
$f\left(x\right)\neq f\left(y\right)$ and the output states of
$\mathcal{A}$ must be orthogonal, i.e., $\left\langle
  \widehat{\Psi}_{x}^{f}\mid\widehat{\Psi}_{y}^{f}\right\rangle
=0$. Since $\mathcal{A}$ and the Block Set have the same Gram matrix for
output states, then from Theorem~\ref{thema}
we have that Eq.~(\ref{freq1}) is satisfied for $\left(x,y\right)$.
Eq.~(\ref{freq2}) is just another way of writing the first property of
Definition ~\ref{block}. Eq.~(\ref{freq3}) is another way of writing
the second property of Definition~\ref{block}. Finally, the values assigned for
$\left\{ w_{k_{1}k_{2}}\right\} $ are a positive semi-definite matrix
because it is the Gram matrix of $\left\{
  \widehat{\Psi}\left(k\right)\right\} $.

In the second part of the proof, since the values for $\left\{
  w_{k_{1}k_{2}}\right\} $ form a positive semi-definite matrix then
it is a Gram matrix for a set of vectors $\left\{
  \left|\Psi\left(k\right)\right\rangle :k\in \mathbb{Z}_{n+1}^{t+1}
\right\} $. This set of vectors satisfy the first property of Definition~\ref{block} according to Eq.~(\ref{freq2}) and the second property 2 of Definition~\ref{block}  according to Eq.~(\ref{freq3}). If we define the appropriate spaces
$H_{1}$ and $H_{2}$, then the third and fourth properties of Definition~\ref{block} are satisfied and $\left\{\Psi\left(k\right)\right\}$ is a
Block Set. From Eq.~(\ref{freq1}) and Theorem~ref{thema}
, we have that the sets of output states $\left\{ \left|\Psi_{x}^{f}\right\rangle
  ,x\in X\right\} $ and $\left\{ \left|\Psi_{x}^{f}\right\rangle ,x\in
  Y\right\}$ generate two orthogonal spaces. Therefore, there is a CSOP that
allows us to measure the output exactly. From Theorems~\ref{desc1}
and~ref{refff}
, we can conclude that a quantum query algorithm associated to
$\left\{ \Psi\left(k\right)\right\} $ jointly with the CSOP computes
$f$ exactly in $t+1$ queries.
\end{proof}

System $E\left(t,n,X,Y\right)$ has  an
exponential number of variables, then using this theorem for any
numerical procedure is impractical and the theorem itself is difficult
to use as an analytic tool. Another difficulty is maintaining the
semi-definite property of the solution. Nevertheless there exists the
possibility of taking special cases of this general formulation. For
example, if we assume that some variables are equal to zero,
then we can construct particular families of exact quantum algorithms
more easily. This is the strategy that we use in the following corollary 
for obtaining a more practical tool.

Let the system $\widehat{E}\left(t,n,X,Y\right)$ be the union of 
the following equations:
  \begin{equation}\label{freqq1}
    \underset{k\in \mathbb{Z}_{n+1}^{t+1}
    }{\sum}\left(\bar P_{k,k}\left[x,y\right]+ \bar Q_{k,k}\left[x,y\right]\right)w_{kk}=\frac{1}{2},
  \end{equation}
for each $\left(x,y\right)\in X \times Y$, and 
\begin{equation} \label{freqq2} \underset{k\in
      \mathbb{Z}_{n+1}^{t+1} }{\sum}w_{kk}=1. 
\end{equation}

\begin{corollary}
  \label{ficor}
  Let $f:\left\{ 0,1\right\} ^{n}\rightarrow\left\{ 0,1\right\} $ be a partial
  function where $x\in X$ and $y\in Y$
  implies that $f\left(x\right)\neq f\left(y\right)$. 
  If
  $\widehat{E}\left(t,n,X,Y\right)$ has solution over the non-negative
  real numbers, then there is a quantum query algorithm that
  computes $f$ exactly in $t+1$ queries.
\end{corollary}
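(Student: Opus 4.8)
The plan is to derive Corollary~\ref{ficor} as a specialization of Theorem~\ref{linsys}, by showing that any non-negative solution of the smaller system $\widehat{E}(t,n,X,Y)$ extends — by padding with zeros off the diagonal — to a solution of the full system $E(t,n,X,Y)$ satisfying the positive semi-definiteness requirement. Concretely, given a non-negative solution $\{w_{kk}\}$ of Eqs.~\eqref{freqq1}--\eqref{freqq2}, I would set $w_{kh}=0$ for all $k\neq h$ and keep the given $w_{kk}\geq 0$ on the diagonal, and then verify that this assignment meets every requirement of Theorem~\ref{linsys}.

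First I would check the semi-definiteness clause: a diagonal matrix with non-negative diagonal entries is automatically positive semi-definite, so the hypothesis $w_{kk}\geq 0$ for all $k$ is exactly what is needed. Next I would check that the zero-padded assignment satisfies the three families of equations defining $E(t,n,X,Y)$. Equation~\eqref{freq3} coincides verbatim with Eq.~\eqref{freqq2}. Equation~\eqref{freq2} becomes, under $w_{kh}=0$ for $k\neq h$, a sum of terms $w_{kh}$ with $k\in\mathcal{I}_i(k')$, $h\in\mathcal{I}_i(h')$ and $k_i'\neq h_i'$; since $k_i'\neq h_i'$ forces $k\neq h$, every term in that sum is zero, so Eq.~\eqref{freq2} holds trivially. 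For Eq.~\eqref{freq1}, after discarding the $k\neq h$ terms the left-hand side collapses to $\sum_{k}(\bar P_{k,k}[x,y]-2\bar R_{k,k}[x,y]+\bar Q_{k,k}[x,y])\,w_{kk}$; here I would invoke the observation (already used in the proof of Corollary~\ref{bortho2}) that $\bar R_{k,k}=0$ for every $k$, so this reduces precisely to the left-hand side of Eq.~\eqref{freqq1}, which equals $\tfrac12$ by hypothesis. Hence all equations of $E(t,n,X,Y)$ are satisfied.

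Having verified that the zero-padded assignment is a real, positive semi-definite solution of $E(t,n,X,Y)$, I would then apply Theorem~\ref{linsys} directly: since $f$ is a partial function with $f(x)\neq f(y)$ whenever $x\in X$, $y\in Y$, the theorem yields a quantum query algorithm computing $f$ exactly in $t+1$ queries, which is the conclusion of the corollary.

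I do not expect a serious obstacle here — the whole argument is bookkeeping — but the one point that deserves care is confirming that $\bar R_{k,k}=0$ for all $k$, i.e. that a diagonal term cannot simultaneously lie in $P(k)$ (parity $x_{k_0}\oplus\cdots\oplus x_{k_t}=0$) and $Q(k)$ (the same parity $=1$); this is immediate from Lemma~\ref{thethe}. The other subtlety worth a sentence is that the spaces $(\widehat H_Q,\widehat H_W)$ realizing the Gram matrix can always be chosen, since a positive semi-definite matrix is a Gram matrix of some vector family and Definition~\ref{block}'s dimension constraints can be met by enlarging $H_W$ as needed — exactly as in the second half of the proof of Theorem~\ref{linsys}, which I would simply cite rather than repeat.
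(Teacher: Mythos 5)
Your proposal is correct and follows essentially the same route as the paper: pad the diagonal solution of $\widehat{E}(t,n,X,Y)$ with zeros off the diagonal, note that Eq.~(\ref{freq2}) vanishes term by term, that Eq.~(\ref{freq1}) collapses to Eq.~(\ref{freqq1}) because $\bar R_{k,k}=0$, and that a non-negative diagonal matrix is positive semi-definite, then invoke Theorem~\ref{linsys}. Your write-up is in fact cleaner about the logical direction (solution of $\widehat{E}$ implies solution of $E$) than the paper's own phrasing, but the mathematical content is identical.
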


\begin{proof}
If the Block Set has the restriction of being orthogonal (see Definition
\ref{bortho}) for computing~$f$, then taking $w_{k_{1}k_{2}}=\left\langle
  \Psi\left(k_{1}\right)\mid\Psi\left(k_{2}\right)\right\rangle$ 
  and $\left(k_{1}\neq k_{2}\right)$
implies that $ w_{k_{1}k_{2}}=0$. 
Then, Eq.~(\ref{freq3}) is the same as
Eq.~(\ref{freqq2}), Eq.~(\ref{freq2}) disappears and as $\bar
R_{k,k}=0$ then Eq.~(\ref{freq1}) becomes Eq.~(\ref{freqq1}). Finally, the matrix formed by elements $w_{k_{1}k_{2}}$ has no negative value in the diagonal
 and has zero in the rest, this guarantees the positive
semi-definite property.
\end{proof}

The orthogonality condition takes off computational power of the algorithms that we can obtain. However, this set of algorithms is still interesting. For example, it contains all exact quantum algorithms that use a single query. The largest possible separation between quantum and randomized query complexities can be obtained by a single query quantum algorithm---which is therefore orthogonal---even though this algorithm is not exact~\cite{Aaronson}.
Corollary \ref{ficor} is a much simpler tool, in the sense that each $k\in
\mathbb{Z}_{n+1}^{t+1}$ has an independent influence to the Gram
matrix. Let $T\left(k\right)$ be the set of pairs $\left(x,y\right)$
such that $$\bar P_{k,k}\left[x,y\right]+\bar
Q_{k,k}\left[x,y\right]=1.$$ 
We can say that the weight of
$T\left(k\right)$ on the Gram matrix is controlled by the
value of $w_{kk}$ and the intersection of those sets determines which
regions of $\left\{ 0,1\right\} ^{n}\times\left\{ 0,1\right\} ^{n}$
satisfy Eq.~(\ref{freqq1}). That is equivalent to saying
that those regions have value $0$ in the Gram matrix, and thus
determines which inputs can be computed exactly for a given
algorithm. However, the amount of weight that we can give to each $k$ is
limited by Eq.~(\ref{freqq2}).  It is also important to notice that
increasing $t$ increases the possible shapes for $T\left(k\right)$ and
enlarges the set of possible Gram matrices that we can
obtain. 
We can even imagine a random
procedure for generating arbitrary exact quantum algorithms. The first step is giving
weights for some set of variables $\left\{ w_{kk}:k\in L\subset
  \mathbb{Z}_{n+1}^{t+1}\right\} $ until the limit imposed by
Eq.~(\ref{freqq2}) is reached,  
the last step is searching interesting
sets $X$ and $Y$ such that
$x \in X$ and $y \in Y$ iff
$$\underset{k}{\sum}\left(\bar P_{k,k}\left[x,y\right]+\bar Q_{k,h}\left[x,y\right]\right)w_{kk}=\frac{1}{2}.$$ 
The design of exact quantum algorithms using Corollary~\ref{ficor} 
can be done by
analyzing the possible multiple intersections between the elements in
set $$\left\{ T\left(k\right):k\in \mathbb{Z}_{n+1}^{t+1} \right\}.$$

There are two useful observations that can be considered if we want to
use Corollary~\ref{ficor}. Let $\overline{x}$ be the bit-wise negation of $x\in\left\{ 0,1\right\}
^{n}$, i.e., $\overline{x}\in\left\{ 0,1\right\}^{n}$ such that
$x_{i}\neq\overline{x}_{i}$ for all i.  It is not difficult to
prove that 

$$\bar P_{k,k}\left[x,y\right]+\bar
Q_{k,k}\left[x,y\right]=\bar
P_{k,k}\left[\overline{x},\overline{y}\right]+\bar
Q_{k,k}\left[\overline{x},\overline{y}\right]$$ 
for all $k$, and as a consequence
all Gram matrix $G$ obtained using the corollary have the
property that
$G\left[x,y\right]=G\left[\overline{x},\overline{y}\right]$. Moreover,
if $p\left(k\right)$ represents all the permutations of $k$, then
$\bar P_{k,k}+\bar Q_{k,k}=\bar P_{k',k'}+\bar Q_{k',k'}$ for all
$k'\in p\left(k\right)$. Thus, assigning random values to the set of
unknowns $W\left(k\right)=\left\{ w_{k'k'}:k'\in
  p\left(k\right)\right\}$, keeps the Gram matrix 
  invariant as long as the sum~$\underset{k'\in W\left(k\right)}{\sum}w_{k'k'}$ remains constant.

\subsection{A generalization of the Deutsch-Jozsa algorithm by means
  of the Block Set Formalism}

We show an example of BSF algorithm obtained by this analysis. We
assume that $n$ is even and $n>2t$. Thereby, we define the set
$\left\{ k_{i}:0<i\leq n\right\} \subset \mathbb{Z}_{n+1}^{t+1}$, such
that
$k_{i}=\left(r\left(i\right),r\left(i+1\right),\ldots,r\left(i+t\right)\right)$,
where $r\left(i\right)=i$ for $i\leq n$ and $r\left(i\right)=(i-n)$
for $i>n$. If we take $w_{k_{i}k_{i}}=\frac{1}{n}$ for all $0<i\leq
n$, then the system $\widehat{E}\left(t,n,X,Y\right)$ is satisfied for
$X=\left\{ 0^{n},1^{n}\right\}$ and $Y=\left\{ x\in\left\{ 0,1\right\}
  ^{n}:S\left(x\right)=\frac{n}{2}\right\}$; where we define
$S\left(x\right)$ as the number of satisfied Boolean clauses
$\phi_{i}=x_{r\left(i\right)}\oplus
x_{r\left(i+1\right)}\oplus\ldots\oplus x_{r\left(i+t\right)}$, such
that $0<i\leq n$. We can claim that $\widehat{E}\left(t,n,X,Y\right)$
is satisfied under the following observations. The equation
  
\begin{equation}\label{freqq3}
  \underset{i}{\sum}\left( \bar P_{k_{i},k_{i}}\left[0^{n},y\right]+\bar Q_{k_{i},k_{i}}\left[0^{n},y\right]\right)w_{k_{i}k_{i}}=\frac{1}{2}
\end{equation} is satisfied only if $\frac{n}{2}$ matrices $\bar P_{k_{i},k_{i}}$ are equal to $1$ in column $y$ and row $0^{n}$, because matrices $\bar Q_{k_{i},k_{i}}$ do not have values $1$ on row $0^{n}$. Last claims imply that $S\left(y\right)=\frac{n}{2}$. Finally, 
since  $S\left(x\right)=\frac{n}{2}$ also implies that $S\left(\overline{x}\right)=S\left(x\right)$, we have that Eq.~(\ref{freqq3}) must hold also for $\overline{y}$. Recall that $G\left[x,y\right]=G\left[\overline{x},\overline{y}\right]$. Therefore,
 \begin{equation}
   \underset{i}{\sum}\left(\bar P_{k_{i},k_{i}}\left[1^{n},y\right]+\bar Q_{k_{i},k_{i}}\left[1^{n},y\right]\right)w_{k_{i}k_{i}}=\frac{1}{2}
 \end{equation} for all $y$ such that $S\left(y\right)=\frac{n}{2}$.
  
 Thus, by Corollary~\ref{ficor} there is an exact quantum algorithm
 which computes two different outputs for $X$ and $Y$. The first two
 cases of $t$ are detailed below:
 \begin{itemize}
 \item For $t=0$, there is a BSF algorithm equivalent to Deutsch-Jozsa
   algorithm \cite{Deutsch}.
 \item For $t=1$, there is a BSF algorithm that discriminates $\left\{
     0^{n},1^{n}\right\}$ from $x$, where  there is a set $S$ such that 
$i\in S$ iff $x_{i}=x_{i+1}$ and $\left|S\right|=\frac{n}{2}$.  This is stated
   by defining the first bit as
   following the last bit. This algorithm can be implemented in the QQM by applying Deutsch-Jozsa algorithm over the state $\underset{i}{\sum}\left(-1\right)^{x_{i}+x_{j}}\left|i\right\rangle $, where $j \equiv i+1 \mod n$, which costs two queries.
 \end{itemize}
  
\subsection{Characterizing the power of orthogonal algorithms}
 System $\widehat{E}\left(t,n,X,Y\right)$ implies a clear and
 straightforward view on how orthogonal BSF algorithms work, thus it
 is interesting in a theoretical sense. In practice however, we can work
 with a smaller system as it is proved below.

\begin{theorem}
\label{sma}
 The system $\widehat{E}\left(t,n,X,Y\right)$ is equivalent to the
 system $\widetilde{E}\left(t,n,X,Y\right)$, which is defined as the
 union of
 following equations:
\begin{equation}
  \underset{k\in \mathbb{Z}_{n+1}^{t+1}}{\sum} \bar P_{k,k}    \left[0^{n}, x \oplus y \right]w_{kk}=\frac{1}{2}.
\end{equation}
for each $\left(x,y\right)\in X \times Y$, and 
 \begin{equation} \underset{k\in
      \mathbb{Z}_{n+1}^{t+1}}{\sum}w_{kk}=1.
 \end{equation}
\end{theorem}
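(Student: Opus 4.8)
The plan is to show that the two systems are not merely equivalent but in fact term-by-term identical, once we exploit that the parity $\sigma_{k}(x):=x_{k_{0}}\oplus\cdots\oplus x_{k_{t}}$ is $\mathbb{F}_{2}$-linear in $x$. The first step is to record the elementary facts $\sigma_{k}(x\oplus y)=\sigma_{k}(x)\oplus\sigma_{k}(y)$ and $\sigma_{k}(0^{n})=0$ (consistent with the convention $x_{0}=0$, so that $x\oplus y$ is again a legitimate ``input-like'' string).

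Next I would unwind the definitions of $\bar P_{k,k}$ and $\bar Q_{k,k}$ through Lemma~\ref{thethe}. Since $P(k)=\{x:\sigma_{k}(x)=0\}^{2}$ and $Q(k)=\{x:\sigma_{k}(x)=1\}^{2}$, we get $\bar P_{k,k}[x,y]=1$ exactly when $\sigma_{k}(x)=\sigma_{k}(y)=0$, and $\bar Q_{k,k}[x,y]=1$ exactly when $\sigma_{k}(x)=\sigma_{k}(y)=1$. These two events are disjoint, so, the entries being $\{0,1\}$-valued, $\bar P_{k,k}[x,y]+\bar Q_{k,k}[x,y]=1$ iff $\sigma_{k}(x)=\sigma_{k}(y)$, i.e. iff $\sigma_{k}(x\oplus y)=0$. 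On the other hand $\bar P_{k,k}[0^{n},x\oplus y]=1$ iff $\sigma_{k}(0^{n})=\sigma_{k}(x\oplus y)=0$, and since $\sigma_{k}(0^{n})=0$ automatically, this too is exactly the condition $\sigma_{k}(x\oplus y)=0$. Hence for every index $k$ and every pair $(x,y)$,
\[ \bar P_{k,k}[x,y]+\bar Q_{k,k}[x,y]=\bar P_{k,k}[0^{n},x\oplus y]. \]

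From here the theorem follows at once: multiplying this identity by $w_{kk}$ and summing over $k$ turns the equation of $\widehat E(t,n,X,Y)$ attached to a pair $(x,y)\in X\times Y$ into the equation of $\widetilde E(t,n,X,Y)$ attached to the same pair, while the normalization constraint $\sum_{k}w_{kk}=1$ occurs verbatim in both systems; so the two systems have exactly the same solution set, which is stronger than mere equivalence.

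There is no real obstacle in the argument; the only points deserving a moment of care are that the $\{0,1\}$-valued entries behave as indicator functions, so that a sum of indicators of disjoint events is the indicator of their union, and that $0^{n}$ and $x\oplus y$ are used purely as row and column labels and need not lie in $X$ or $Y$. One may also remark that distinct pairs $(x,y)\in X\times Y$ with the same value of $x\oplus y$ yield the same equation in $\widetilde E$, which is precisely why $\widetilde E$ is the smaller system alluded to before the statement.
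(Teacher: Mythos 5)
Your proof is correct and follows essentially the same route as the paper: the paper's proof simply asserts the identity $\bar P_{k,k}[x,y]+\bar Q_{k,k}[x,y]=\bar P_{k,k}[0^{n},x\oplus y]$ and sums against $w_{kk}$, whereas you additionally justify that identity via the $\mathbb{F}_2$-linearity of the parity $\sigma_k$ and Lemma~\ref{thethe}. Your version is a more complete write-up of the same argument.
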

\begin{proof}
Let $x \oplus y \in\left\{ 0,1\right\} ^{n}$ be the bit-wise
xor operation between $x$ and $y$. Consider the identity $\bar
P_{k,k}\left[x,y\right]+\bar Q_{k,k}\left[x,y\right]=\bar
P_{k,k}\left[0^{n},x \oplus y\right]$.
Then 
\begin{equation} \underset{k\in
    \mathbb{Z}_{n+1}^{t+1}}{\sum}\left(\bar P_{k,k}\left[x,y\right]+
    \bar Q_{k,k}\left[x,y\right]\right)w_{kk}= \underset{k\in
    \mathbb{Z}_{n+1}^{t+1} }{\sum} \bar
  P_{k,k}\left[0^{n},x \oplus y\right]w_{kk}.
\end{equation}
\end{proof}

Last theorem implies that system $\widetilde{E}\left(t,n,X,Y\right)$
is equivalent to $\widehat{E}\left(t,n,0^{n},Z\right)$, where there is
defined $Z=\left\{ x \oplus y:\left(x,y\right)\in X\times
  Y\right\}$.  Thereby, if an exact orthogonal BSF algorithm
discriminates $0^{n}$ from $Z$, then it also can be used for
discriminating $X$ from $Y$ with error zero. In the case of orthogonal
BSF algorithms, Theorem~9 allows us to simplify the
algorithm-construction problem, we just need to determine which sets
can be discriminated from $0^{n}$ given a bounded $t$.
Recall that
a permutation on vector $k$ gives the same variable $w_{kk}$, besides repeated values $k_i=k_j$ in $k$ also implies redundancy, then system $\widetilde{E}\left(t,n,0^{n},Z\right)$ implies a  matrix of size
$\mathcal{O}\left(2^{n}\right)\times\mathcal{O}\left(2^{n}\right)$. We can compare it with the system given by Barnum, Saks and Szegedy~\cite{BARNUM} which implies $\mathcal{O}\left(t\right)$ matrices of size $\mathcal{O}\left(2^n\right)$, thus $\widetilde{E}\left(t,n,0^{n},Z\right)$ is less powerful but also computationally cheaper.

Corollary~\ref{sinte} characterizes the computational power of orthogonal exact algorithms, but first we define a problem that is general enough for describing any function whose domain is in the hypercube.

\begin{definition}[XOR-Weighted-Problem]

Let be a set of Boolean formulas 
$$\mathbb{X}=\left\{
  \underset{i}{\bigoplus}x_{k_{i}}:x_{0}=0,k\in
  K\subset\mathbb{Z}_{n+1}^{t+1}\right\},$$ 
  where each formula is
associated to a weight $w_{kk}>0$ such that $\underset{k\in
  K}{\sum}w_{kk}=1$. Consider $m$ disjoint sets $X_{i}\subset\left\{ 0,1\right\} ^{n}$ and $Z=\left\{ x\oplus y:\left(x,y\right)\in X_{i}\times X_{j}\right\} $, such that $z\in Z$
  implies that $S\left(z\right)_{w}=\frac{1}{2}$, where $S\left(z\right)_{w}$ is the sum of weights of each formula in $\mathbb{X}$ that is satisfied by $z$. The XOR-Weighted-Problem consists in separating sets $X_{i}$ in different outputs.
  
\end{definition}

\begin{corollary}
\label{sinte}

Quantum exact algorithms can solve the XOR-Weighted-Problem within $t+1$ queries.

\end{corollary}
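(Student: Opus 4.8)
The plan is to reduce Corollary~\ref{sinte} to the machinery already developed for orthogonal Block Sets, specifically Corollary~\ref{ficor} together with the simplification in Theorem~\ref{sma}. The key observation is that the XOR-Weighted-Problem is essentially a repackaging of the hypotheses of $\widehat{E}(t,n,X,Y)$: the Boolean formulas $\bigoplus_i x_{k_i}$ indexed by $k\in K$ are exactly the XOR-clauses that appear in the definition of the matrices $\bar P_{k,k}$, and the weight condition $\sum_{k\in K} w_{kk}=1$ is precisely Eq.~(\ref{freqq2}). So the first step is to translate the combinatorial data of the problem (the set $\mathbb{X}$, the weights, the disjoint sets $X_i$, and the set $Z$) into an assignment of the unknowns $\{w_{kk}\}$: set $w_{kk}$ to the given weight for $k\in K$ and $w_{kk}=0$ otherwise.

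Next I would verify that this assignment solves $\widetilde{E}(t,n,X_i,X_j)$ for every pair $i\neq j$. By Theorem~\ref{sma}, the relevant equation for a pair $(x,y)\in X_i\times X_j$ is $\sum_k \bar P_{k,k}[0^n, x\oplus y]\, w_{kk}=\tfrac12$, and $\bar P_{k,k}[0^n,z]=1$ exactly when $z$ satisfies the clause $\bigoplus_i z_{k_i}$ has even parity --- i.e. when the clause is \emph{not} satisfied in the sense of the problem, so some care with the parity/satisfaction convention is needed here. Writing $z=x\oplus y$, the hypothesis $S(z)_w=\tfrac12$ says precisely that the total weight of satisfied clauses is $\tfrac12$, hence the total weight of unsatisfied clauses is also $\tfrac12$, which is exactly the left-hand side of the equation. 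The normalization equation $\sum_k w_{kk}=1$ holds by assumption. So $\widetilde{E}(t,n,X_i,X_j)$ is satisfied, and by the equivalence asserted after Theorem~\ref{sma} (that $\widetilde{E}$ is equivalent to $\widehat{E}(t,n,0^n,Z)$, hence to $\widehat{E}(t,n,X_i,X_j)$), the system $\widehat{E}(t,n,X_i,X_j)$ has a non-negative real solution. Corollary~\ref{ficor} then yields an exact quantum query algorithm in $t+1$ queries that separates $X_i$ from $X_j$.

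The remaining step is to go from pairwise separation to a single algorithm separating all $m$ sets simultaneously. Since the assigned weights $\{w_{kk}\}$ do not depend on the particular pair $(i,j)$, the \emph{same} Block Set --- hence the same unitary operators up to the final measurement --- works for every pair, and by Corollary~\ref{ficor}'s construction the output states for all $x\in\bigcup_i X_i$ live in a common space $H_A$ with a Gram matrix $G$ satisfying $G[x,y]=0$ whenever $x\in X_i$, $y\in X_j$, $i\neq j$. Thus the sets of output states $\{\ket{\Psi_x^f}:x\in X_i\}$ span $m$ mutually orthogonal subspaces of $H_A$, so there is a CSOP indexed by $\{1,\dots,m\}$ (take $P_i$ to be the projector onto the $i$th subspace, plus one extra projector onto the orthogonal complement) that distinguishes them with zero error. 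Combining this CSOP with the Block Set and invoking Theorems~\ref{desc1} and~\ref{refff} gives the desired quantum query algorithm.

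I expect the main obstacle to be bookkeeping rather than anything deep: one must be careful about the convention relating ``satisfied clause'' to the parity $(-1)^{\sum x_{k_i}}$ and hence to whether $\bar P_{k,k}$ or $\bar Q_{k,k}$ is triggered, and one must check that restricting the index set to $K$ (setting all other $w_{kk}=0$) is compatible with the constraints --- but the identity $\bar P_{k,k}[x,y]+\bar Q_{k,k}[x,y]=\bar P_{k,k}[0^n,x\oplus y]$ from the proof of Theorem~\ref{sma} makes the parity accounting routine. The passage from pairwise to simultaneous separation is immediate once one notes the weight assignment is pair-independent, so no genuine difficulty arises there either.
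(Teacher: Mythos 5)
Your proposal is correct and follows essentially the same route as the paper, whose proof is simply the one-line remark that the corollary is a reformulation of Theorem~\ref{sma} (feeding the weights $w_{kk}$ into $\widetilde{E}$ via Corollary~\ref{ficor}). You merely make explicit the details the paper leaves implicit --- the parity convention relating ``satisfied clause'' to $\bar P_{k,k}[0^n,x\oplus y]$, and the passage from pairwise orthogonality of output states to a single CSOP separating all $m$ sets --- both of which you handle correctly.
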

\begin{proof}
This is a reformulation of Theorem~\ref{sma}. Notice that $x_{0}=0$ in $\mathbb{X}$ is a consequence of Eq.~(\ref{oper}).
\end{proof}

Corollary~\ref{sinte} characterizes the power of orthogonal exact algorithms and it represents a model whose complexity upper-bounds the quantum query model. Fig.~\ref{im4} gives a visualization of the XOR-Weighted-Problem. 

\section{A lower bound for exact quantum algorithms}
\label{S6}

In this section, using the BSF approach, we develop a lower bound result for exact quantum  query complexity, considering functions of Boolean domain but arbitrary output.

We apply a basis for the Boolean cube \cite{DEWOLF1}, which is a family of functions
$$\mathcal{F}_{k}^{n}:\left\{ 0,1\right\} ^{n}\rightarrow\left\{ 1,-1\right\},$$ such that each $\mathcal{F}_{k}^{n}(x)=\prod_{i=0}^{n-1} (-1)^{x_{k_{i}}}$ is defined for vectors $k\in\mathbb{Z}_{n}^{n}$.

Consider that, for $k\neq h$, it is possible that $\mathcal{F}_{k}^{n}=\mathcal{F}_{h}^{n}$. Thereby, we need to define an equivalence relation $k\thicksim h$, for $k,h\in\mathbb{Z}_{n}^{n}$ such that $\mathcal{F}_{k}^{n}=\mathcal{F}_{h}^{n}$. We define $\mathbb{Q}_{n}$ as the quotient set of our relation and the set $\left[k\right]\in \mathbb{Q}_{n}$ as
  the equivalence class for element $k$. We also define (a) $\mathbb{F}_{n}$, which elements are defined as functions indexed by $\mathbb{Q}_{n}$ such that $\mathcal{F}_{\left[h\right]}^{n}=\mathcal{F}_{k}^{n}$ iff $k\in\left[h\right]$, and (b) $\mathbb{F}_{n}\left(m\right)\subset F_{n}$, where $\mathcal{F}_{\left[k\right]}\in\mathbb{F}_{n}\left(m\right)$
  iff $\left[k\right]$
  contains an element $h$
  with no more than $2m$
  non-zero terms.  Finally, we define $\mathcal{F}_{\left[k,h\right]}^{n}:\left\{ 0,1\right\} ^{n}\rightarrow\left\{ 0,1\right\} $,
with output 1 iff $\mathcal{F}_{\left[k\right]}^{n}\left(x\right)=\mathcal{F}_{\left[h\right]}^{n}\left(x\right)=1$.
Notice that $$\mathcal{F}_{\left[k,h\right]}^{n}\left(x\right)=\frac{\mathcal{F}_{\left[k\right]}^{n}\left(x\right)+\mathcal{F}_{\left[h\right]}^{n}\left(x\right)+\mathcal{F}_{\left[k\circ h\right]}^{n}\left(x\right)+1}{4},$$
where $\left[k\circ h\right]\in\mathbb{Q}_{n}$ is an equivalence class
such that $\mathcal{F}_{\left[k\circ h\right]}^{n}\left(x\right)=1$
iff $\mathcal{F}_{\left[k\right]}^{n}\left(x\right)=\mathcal{F}_{\left[h\right]}^{n}\left(x\right)$.

For the following result we introduce additional notation. Let $w:\left\{ 0,1\right\} ^{n}\rightarrow\mathbb{R}$ be a function,
and define $w*\mathcal{F}_{\left[h\right]}^{n}=\underset{x\in\left\{ 0,1\right\} ^{n}}{\sum}w\left(x\right)\mathcal{F}_{\left[h\right]}^{n}\left(x\right)$. We denote $\overline{a}$ as a vector such that all its terms are
$a$. Finally, let $\rho\left(i\right)=0$ if $i$ is even, and $\rho\left(i\right)=1$
otherwise.

\begin{theorem}
Consider $m$ disjoint sets $X_{i}\subset\left\{ 0,1\right\} ^{n}$,
such that for each $x\in X_{i}$ there is a set $Z\left(x\right)=\left\{ x\oplus y:y\in X_{j}\, and\, j\neq i\right\} $.
We also define a family of functions $g_{y}^{k}\left(x\right)$ such
that (a) $g_{y}^{k}\left(\overline{0}\right)=1$, (b) $g_{y}^{k}\left(x\right)=\frac{1}{2}$
for $x\in Z\left(y\right)$, (c) $g_{y}^{k}\left(x\right)=1$ for $x$,
where 
\begin{equation}
\label{count}
\underset{i=0}{\overset{2k}{\sum}}\underset{j=0}{\sum}\left(\begin{array}{c}
\left|x\right|\\
i-2j-\rho\left(i\right)
\end{array}\right)\left(\begin{array}{c}
n-\left|x\right|\\
2j+\rho\left(i\right)
\end{array}\right)>\frac{\underset{i=0}{\overset{2k}{\sum}}\left(\begin{array}{c}
n\\
i
\end{array}\right)}{2}
\end{equation}
 and (d) $g_{y}^{k}\left(x\right)=0$ otherwise. If $\underset{\mathcal{F}_{\left[h\right]}^{n}\in\mathbb{F}_{n}\left(k\right)}{\sum}g_{y}^{k}*\mathcal{F}_{\left[h\right]}^{n}\geq1$
for all $y\in\underset{i}{\bigcup}X_{i}$, then an exact quantum algorithm
that gives different outputs for each $X_{i}$, applies at least $k$
queries.
\end{theorem}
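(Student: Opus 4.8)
The plan is to argue by contradiction: suppose an exact quantum algorithm distinguishes the sets $X_i$ using $q$ queries with $q\le k-1$, and suppose the hypothesis $\sum_{\mathcal{F}_{[h]}^n\in\mathbb{F}_n(k)}g_y^k*\mathcal{F}_{[h]}^n\ge1$ holds for all $y\in\bigcup_iX_i$; I will extract a contradiction. \emph{Step 1 (from the algorithm to a bounded low-degree function).} By the equivalence of the QQM with the BSF (Theorem~\ref{blothe}) and the characterization of exact algorithms (Theorem~\ref{linsys}), an exact $q$-query algorithm gives output states $|\Psi_x^f\rangle$ that are unit vectors with $\langle\Psi_x^f|\Psi_y^f\rangle=0$ whenever $x$ and $y$ lie in different sets $X_i$. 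Expanding each output state through its Block Set decomposition (Definition~\ref{decal}) and reading off the probability of the correct output, I obtain for every $y\in\bigcup_iX_i$ a function $\Pi_y:\{0,1\}^n\rightarrow[0,1]$ with $\Pi_y(\overline{0})=1$ and $\Pi_y(z)=\tfrac12$ for $z\in Z(y)$; since the amplitudes of an output state after $q$ queries are products of at most $q$ factors $(-1)^{x_i}$, the probabilities are combinations of parities with at most $2q\le 2(k-1)$ relevant variables, so, after the shift $x\mapsto y\oplus z$, the $\mathcal{F}$-support of $\Pi_y$ is contained in $\mathbb{F}_n(k)$. Thus the data matched by conditions (a)--(b) of $g_y^k$ is actually realised by a $[0,1]$-valued function supported on $\mathbb{F}_n(k)$.

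\emph{Step 2 (identifying the kernel and condition~(\ref{count})).} I introduce the kernel $\Phi_k:=\sum_{\mathcal{F}_{[h]}^n\in\mathbb{F}_n(k)}\mathcal{F}_{[h]}^n$ and record two elementary facts. First, using $\sum_x\mathcal{F}_{[g]}^n(x)\mathcal{F}_{[h]}^n(x)=2^n$ if $[g]=[h]$ and $0$ otherwise, any function $\varphi$ whose $\mathcal{F}$-support lies in $\mathbb{F}_n(k)$ satisfies $\sum_x\varphi(x)\Phi_k(x)=2^n\varphi(\overline{0})$; in particular $\sum_x\Pi_y(x)\Phi_k(x)=2^n$. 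Second, $\Phi_k(x)$ depends only on $|x|$, and a Vandermonde/Krawtchouk manipulation rewrites $\sum_{\mathcal{F}_{[h]}^n\in\mathbb{F}_n(k)}\mathcal{F}_{[h]}^n(x)$ as $\sum_{i=0}^{2k}K_i(|x|)$ with $K_i$ the Krawtchouk polynomial, so that the set $\{x:\Phi_k(x)>0\}$ is exactly the set of $x$ satisfying inequality~(\ref{count}). Hence $g_y^k$ equals $1$ on $\{x:\Phi_k(x)>0\}$ (which contains $\overline{0}$), equals $0$ on $\{x:\Phi_k(x)\le0\}$, and equals $\tfrac12$ on $Z(y)$.

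\emph{Step 3 (the pairing and the contradiction).} Now I pair both $g_y^k$ and $\Pi_y$ against $\Phi_k$. Because $0\le\Pi_y\le1$, on $\{\Phi_k>0\}$ one has $\Pi_y\Phi_k\le\Phi_k=g_y^k\Phi_k$, on $\{\Phi_k\le0\}$ one has $\Pi_y\Phi_k\le0=g_y^k\Phi_k$, while $g_y^k$ coincides with $\Pi_y$ on $Z(y)\cup\{\overline{0}\}$; splitting the sum over $\{0,1\}^n$ according to the sign of $\Phi_k$ and whether $x\in Z(y)$, and substituting $\sum_x\Pi_y(x)\Phi_k(x)=2^n$, expresses $\sum_{\mathcal{F}_{[h]}^n\in\mathbb{F}_n(k)}g_y^k*\mathcal{F}_{[h]}^n=\sum_xg_y^k(x)\Phi_k(x)$ entirely in terms of $n$, $k$, and the masses $\sum_{x}\Phi_k(x)_+$ and $\sum_{x\in Z(y)}|\Phi_k(x)|$; the resulting identity, together with the standing hypothesis that $\sum_{\mathcal{F}_{[h]}^n\in\mathbb{F}_n(k)}g_y^k*\mathcal{F}_{[h]}^n\ge1$ for \emph{every} $y$, is incompatible with $q\le k-1$. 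Therefore $q\ge k$.

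\emph{Main obstacle.} The algorithm-to-polynomial translation of Step~1 and the Krawtchouk rewriting of~(\ref{count}) in Step~2 are essentially bookkeeping. The delicate step is Step~3: making the linear-programming duality between ``there exists a $[0,1]$-valued function with $\mathcal{F}$-support in $\mathbb{F}_n(k)$ interpolating $1$ at $\overline{0}$ and $\tfrac12$ on $Z(y)$'' and ``$\sum_{\mathcal{F}_{[h]}^n\in\mathbb{F}_n(k)}g_y^k*\mathcal{F}_{[h]}^n\ge1$ for all $y$'' genuinely tight—this requires orienting every inequality correctly, respecting the (harmless) overlap priority among conditions~(a)--(d), and accounting for the distinguished bit $x_0$ in the basis $\{\mathcal{F}_k^n\}$.
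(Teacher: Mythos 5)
Your Steps~1 and~2 track the paper's own argument closely, and in places more explicitly: the paper likewise expands the Gram-matrix row $\tfrac12\left(G_{\left[x,x\oplus\cdot\right]}+1\right)$ obtained from Theorem~\ref{thema} into the parities $\mathcal{F}_{\left[h\right]}^{n}$ (plus the cross terms $\mathcal{F}_{\left[h_{i},h_{j}\right]}^{n}$, whose coefficients sum to zero), pairs the result against the kernel $\sum_{\mathcal{F}_{\left[h\right]}^{n}\in\mathbb{F}_{n}\left(k\right)}\mathcal{F}_{\left[h\right]}^{n}$, and uses exactly your observation that $g_{y}^{k}$ is the $\left[0,1\right]$-valued function, pinned to $1$ at $\overline{0}$ and to $\tfrac12$ on $Z\left(y\right)$, that maximizes this pairing; your identification of condition~(\ref{count}) with the set where the kernel is positive is the same binomial count the paper carries out.

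The genuine gap is in Step~3, and it is not the bookkeeping issue you flag: the contradiction you announce does not exist. From the assumed $q\le k-1$ query algorithm you derive $\sum_{x}g_{y}^{k}\left(x\right)\Phi_{k}\left(x\right)\ge\sum_{x}\Pi_{y}\left(x\right)\Phi_{k}\left(x\right)=2^{n}$, which is a \emph{lower} bound on $\sum_{\mathcal{F}_{\left[h\right]}^{n}\in\mathbb{F}_{n}\left(k\right)}g_{y}^{k}*\mathcal{F}_{\left[h\right]}^{n}$ pointing in the same direction as the standing hypothesis that this quantity is $\ge1$; the two are perfectly compatible, so nothing is contradicted and the proof does not close. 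What your chain of inequalities actually establishes---and what the paper's proof establishes, with threshold $\sqrt{2^{n}}$ under its normalization of $*$---is the implication ``an exact low-query algorithm exists $\Rightarrow$ the maximized pairing $\sum g_{y}^{k}*\mathcal{F}_{\left[h\right]}^{n}$ reaches the threshold,'' which produces query lower bounds only in contrapositive form, i.e.\ when one shows the maximized pairing falls \emph{below} the threshold. To repair the write-up you would need either to prove that contrapositive directly (and reconcile the constant $1$ in the statement with the $2^{n}$ or $\sqrt{2^{n}}$ your computation yields), or to supply an upper bound on $\sum_{x}g_{y}^{k}\left(x\right)\Phi_{k}\left(x\right)$ that the hypothetical $\left(k-1\right)$-query algorithm would violate; neither appears in your Step~3, which instead asserts incompatibility without exhibiting it.
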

\begin{proof}
Suppose that a quantum algorithm allows us to separate $x\in X_{i}$
from $\underset{j\neq i}{\bigcup}X_{j}$,  by applying $k$ queries and
without error. Using Gram matrix representation from Theorem~\ref{thema} at
row $x$, we have
\begin{equation}
\frac{1}{2}\left(G_{\left[x,x\oplus y\right]}+1\right)=
\underset{\left[h\right]\in\mathbb{Q}_{n}}{\sum}\alpha_{\left[h\right]}\mathcal{F}_{\left[h\right]}^{n}\left(y\right)+\underset{\left[h_{i}\right]\neq\left[h_{j}\right]}{\sum}\alpha_{\left[h_{i},h_{j}\right]}\mathcal{F}_{\left[h_{i},h_{j}\right]}^{n}\left(y\right)
\end{equation}
Defining $\overline{T}_{h_{1},h_{2}}=\overline{P}_{h_{1},h_{2}}-2\overline{R}_{h_{1},h_{2}}+\overline{Q}_{h_{1},h_{2}}$
from the matrices in Eq.~(\ref{eqsum2}), notice that first sum in the expression
comes from $\overline{T}_{h_{1},h_{2}}$ when $h_{1}=h_{2}$ and second
sum comes from $\overline{T}_{h_{1},h_{2}}$ when $h_{1}\neq h_{2}$.
Thus, we have$\underset{\left[h\right]\in\mathbb{Q}_{n}}{\sum}\alpha_{\left[h\right]}=1$
and $\underset{\left[h_{i}\right]\neq\left[h_{j}\right]}{\sum}\alpha_{\left[h_{i},h_{j}\right]}=0$,
that implies
\begin{equation}
\underset{\mathcal{F}_{\left[h\right]}^{n}\in\mathbb{F}_{n}\left(k\right)}{\sum}\frac{1}{2}\left(G_{\left[x,x\oplus y\right]}+1\right)*\mathcal{F}_{\left[h\right]}^{n}=\sqrt{2^{n}}.
\end{equation}
Thereby, we can state a necessary condition for an orthogonality between
the final states of $x\in X_{i}$ and $\underset{j\neq i}{\bigcup}X_{j}$,
where the algorithm applies $k$ queries. That is the existence of
some function $g:\left\{ 0,1\right\} ^{n}\rightarrow\left[0,1\right]$,
such that $g\left(x\right)=\frac{1}{2}$ for $x\in\underset{j\neq i}{\bigcup}X_{j}$,
$g\left(\overline{0}\right)=1$ and $\underset{\mathcal{F}_{\left[h\right]}^{n}\in\mathbb{F}_{n}\left(k\right)}{\sum}g*\mathcal{F}_{\left[h\right]}^{n}\geq\sqrt{2^{n}}$.
The function $g_{y}^{k}\left(x\right)$ fulfills such properties maximizing
$\underset{\mathcal{F}_{\left[h\right]}^{n}\in\mathbb{F}_{n}\left(k\right)}{\sum}g*\mathcal{F}_{\left[h\right]}^{n}$. That is, if $x\notin\underset{j\neq i}{\bigcup}X_{j}-\left\{ \overline{0}\right\}$
then $g_{y}^{k}\left(x\right)=1$ for inputs $x$ such that there
are more functions in $\mathbb{F}_{n}\left(k\right)$ with value~$1$
than~$-1$ and $g_{y}^{k}\left(x\right)=0$ otherwise. Notice that
$\underset{i=0}{\overset{2k}{\sum}}\left(\begin{array}{c}
n\\
i
\end{array}\right)$ is the cardinality of $\mathbb{F}_{n}\left(k\right)$ and
$$\underset{i=0}{\overset{2k}{\sum}}\underset{j=0}{\sum}\left(\begin{array}{c}
\left|x\right|\\
i-2j-\rho\left(i\right)
\end{array}\right)\left(\begin{array}{c}
n-\left|x\right|\\
2j+\rho\left(i\right)
\end{array}\right)$$
is the cardinality of functions in $\mathbb{F}_{n}\left(k\right)$
with value~$1$ in $x$.
\end{proof}

This theorem offers an alternative lower-bound to traditional tools like Polynomial and Adversary methods. 
We present a simple example of its application, that is the total function $f$ that separates $X_{1}=\left\{ \overline{0},\overline{1}\right\} $
from $X_{2}=\left\{ 0,1\right\} ^{n}-X_{1}$. For any $y\in X_{1}$,
we have $g_{y}^{k}\left(x\right)=\widehat{g}_{y}\left(x\right)+\widetilde{g}_{y}\left(x\right)$, where (a) $\widehat{g}_{y}\left(\overline{0}\right)=1$ and $\widehat{g}_{y}\left(x\right)=\frac{1}{2}$
for $x\neq\overline{0}$, and (b) $\widetilde{g}_{y}\left(\overline{1}\right)=\frac{1}{2}$
and $\widetilde{g}_{y}\left(x\right)=0$ for $x\neq\overline{1}$.
That is because $k\leq\left\lfloor \frac{n}{2}\right\rfloor $ implies
that Eq.~(\ref{count}) for $x=\overline{1}$ becomes $$\underset{i=0}{\overset{k}{\sum}}\left(\begin{array}{c}
n\\
2i
\end{array}\right)>\frac{\underset{i=0}{\overset{2k}{\sum}}\left(\begin{array}{c}
n\\
i
\end{array}\right)}{2}.$$ 
Thus, we have $$\underset{\mathcal{F}_{\left[h\right]}^{n}\in\mathbb{F}_{n}\left(k\right)}{\sum}\widehat{g}_{y}*\mathcal{F}_{\left[h\right]}^{n}=\frac{1}{\sqrt{2^{n}}}\underset{i=0}{\overset{2k}{\sum}}\left(\begin{array}{c}
n\\
i
\end{array}\right)$$ and $$\underset{\mathcal{F}_{\left[h\right]}^{n}\in\mathbb{F}_{n}\left(k\right)}{\sum}\widetilde{g}_{y}*\mathcal{F}_{\left[h\right]}^{n}=\frac{1}{2\sqrt{2^{n}}}\underset{i=0}{\overset{k}{\sum}}\left(\begin{array}{c}
n\\
2i
\end{array}\right).$$ Choosing $k=\left\lceil \frac{4n}{10}\right\rceil $, we have that
$$\underset{i=0}{\overset{\left\lceil \frac{4n}{10}\right\rceil }{\sum}}\left(\begin{array}{c}
n\\
2i
\end{array}\right)>4\left[\underset{i=2\left\lceil \frac{4n}{10}\right\rceil }{\overset{n}{\sum}}\left(\begin{array}{c}
n\\
i
\end{array}\right)\right].$$ 
That is enough for proving $$\underset{\mathcal{F}_{\left[h\right]}^{n}\in\mathbb{F}_{n}\left(\left\lceil \frac{4n}{10}\right\rceil \right)}{\sum}g_{y}^{\left\lceil \frac{4n}{10}\right\rceil }*\mathcal{F}_{\left[h\right]}^{n}\geq\sqrt{2^{n}},$$ which gives a lower bound $Q_{E}\left(f\right)=\Omega\left(n\right)$.

\section{Conclusion}
\label{S7}
In this work, we presented tree theoretical results.
Our main theoretical result was the Block Set Formulation, which is a
reformulation of the Quantum Query Model such that the unitary
operators are replaced by phase inversions over a set of vectors. This
contribution gives an alternative interpretation on how quantum query
algorithms work. 
A second result is a linear system
of equations that allows an alternative analysis and construction  of quantum exact algorithms for  partial functions. These constructions are delimited by a problem defined by weights over formulas, which can be considered a model that upper-bounds the QQM. Finally, we apply the BSF approach for developing a lower-bound for exact quantum algorithms. These results give a validation of our formulation.

This approach leaves open problems and research possibilities:
\begin{itemize}
\item It is possible to obtain 
  algorithms with some error 
  by using the introduced tools, for example by approximate solutions
  to system $E\left(t,n,X,Y\right)$, but this does not guarantee a
  bounded error.  This approach would be extended by finding a
  sufficient and necessary condition for obtaining a bounded error
  algorithm.
\item The condition $\left(k_{1}\neq k_{2}\Rightarrow
    w_{k_{1}k_{2}}=0\right)$ used in Corollary~\ref{ficor} could be
  weakened for some unknowns obtaining more powerful yet complicated
  models than the orthogonal BSF.  Which strategies can be developed
  for constructing exact quantum algorithms under the general BSF
  (Theorem~\ref{linsys}) 
  or a weaker condition?
\end{itemize}

\subsection*{Acknowledgements}

This work received financial support from CAPES and CNPq. The authors thank
the group of Quantum Computing at LNCC/MCTI and the Laboratory of
Algorithms and Combinatorics at PESC/COPPE/UFRJ for helpful discussions.

\bibliographystyle{plain}
\bibliography{bsfrefs}
\section*{Appendix A}
In this appendix, we give a simple one-dimensional example of Block Set associated to a QQM algorithm, namely Deutsch's algorithm. For simplicity our $H_{W}$ is an empty set, thus the algorithm has an initial state $\Psi_{0}=\left|0\right\rangle$. The QQM representation of Deutsch's algorithm takes the unitary operators
\[
U_{0}=\left[\begin{array}{ccc}
0 & 0 & 1\\
\frac{1}{\sqrt{2}} & \frac{1}{\sqrt{2}} & 0\\
\frac{1}{\sqrt{2}} & -\frac{1}{\sqrt{2}} & 0
\end{array}\right],
\] and $U_{1}=I$. The CSOP used by the measurement step is not important for our purposes. Considering the CSOP $\left\{ P_{k}\right\}$ as defined in Section~\ref{S2}, we have $P_{i}=\left|i\right\rangle \left\langle i\right|$. Using Definition~\ref{debs}, we obtain each element of the Block Set, namely
\[\left|\Psi\left(0\right)\right\rangle =\widetilde{P}_{0}^{0}\left|0\right\rangle =U_{0}^{\dagger}P_{0}U_{0}\left|0\right\rangle =\left[\begin{array}{c}
0\\
0\\
0
\end{array}\right],\]
\[\left|\Psi\left(1\right)\right\rangle =\widetilde{P}_{1}^{0}\left|0\right\rangle =U_{0}^{\dagger}P_{1}U_{0}\left|0\right\rangle =\left[\begin{array}{c}
\frac{1}{2}\\
\frac{1}{2}\\
0
\end{array}\right],\] 
and
\[\left|\Psi\left(2\right)\right\rangle =\widetilde{P}_{2}^{0}\left|0\right\rangle =U_{0}^{\dagger}P_{2}U_{0}\left|0\right\rangle =\left[\begin{array}{c}
\frac{1}{2}\\
-\frac{1}{2}\\
0
\end{array}\right].\]

Take $\left\{ \left|\Psi_{x}^{f}\right\rangle \right\}$ 
  and $\left\{ \left|\widetilde{\Psi}_{x}^{f}\right\rangle \right\}$
  as the final states of Deutsch's algorithm and the Block Set, respectively. We consider that $x=x_{n}\ldots x_{2}x_{1}$. Using Definition~\ref{decal} and Theorem~\ref{desc1}, 
  we have
\[\left|\widetilde{\Psi}_{00}^{f}\right\rangle =\left|\Psi\left(0\right)\right\rangle +\left|\Psi\left(1\right)\right\rangle +\left|\Psi\left(2\right)\right\rangle =\left[\begin{array}{c}
1\\
0\\
0
\end{array}\right],\]
\[\left|\widetilde{\Psi}_{01}^{f}\right\rangle =\left|\Psi\left(0\right)\right\rangle -\left|\Psi\left(1\right)\right\rangle +\left|\Psi\left(2\right)\right\rangle =\left[\begin{array}{c}
0\\
-1\\
0
\end{array}\right],\]
\[\left|\widetilde{\Psi}_{10}^{f}\right\rangle =\left|\Psi\left(0\right)\right\rangle +\left|\Psi\left(1\right)\right\rangle -\left|\Psi\left(2\right)\right\rangle =\left[\begin{array}{c}
0\\
1\\
0
\end{array}\right]\]
and 
\[\left|\widetilde{\Psi}_{11}^{f}\right\rangle =\left|\Psi\left(0\right)\right\rangle -\left|\Psi\left(1\right)\right\rangle -\left|\Psi\left(2\right)\right\rangle =\left[\begin{array}{c}
-1\\
0\\
0
\end{array}\right].\]

Since in this case we have the identities \[\left|\Psi_{00}^{f}\right\rangle =\left[\begin{array}{c}
0\\
\frac{1}{\sqrt{2}}\\
\frac{1}{\sqrt{2}}
\end{array}\right]=-\left|\Psi_{11}^{f}\right\rangle \]
  and 
  \[\left|\Psi_{01}^{f}\right\rangle =\left[\begin{array}{c}
0\\
-\frac{1}{\sqrt{2}}\\
\frac{1}{\sqrt{2}}
\end{array}\right]=-\left|\Psi_{10}^{f}\right\rangle,\] if we calculate the Gram matrices of $\left\{ \left|\Psi_{x}^{f}\right\rangle \right\}$
  and $\left\{ \left|\widetilde{\Psi}_{x}^{f}\right\rangle \right\},$
  then we obtain the same matrix 
  \[
G=\left[\begin{array}{cccc}
1 & 0 & 0 & -1\\
0 & 1 & -1 & 0\\
0 & -1 & 1 & 0\\
-1 & 0 & 0 & 1
\end{array}\right].
\] 
In both algorithms, the final states for inputs $X=\left\{ 00,11\right\} $ are orthogonal to the final states for inputs $Y=\left\{ 01,10\right\} $. Thereby there exist CSOPs that discriminate $X$ from $Y$ within error $0$, for both algorithms. This example show that both QQM and BSF algorithms are equivalent to Deutsch's algorithm by choosing the appropriate measurement steps.
\section*{Appendix B}
Here, we extend our previous example of Block Set obtained from Deutsch's algorithm. This extension shows concepts introduced by Section~\ref{secpbs}. In our example, all one-dimensional Block Sets are orthogonal, thus this algorithm is represented by Corollary \ref{bortho2}. In other words, for this case, $k\neq h\Rightarrow$  $\overline{P}_{k,h}+\overline{Q}_{k,h}=0$. Thereby, we are just interested in matrices of the form $\overline{P}_{k,k}+\overline{Q}_{k,k}$. Matrices for each element $k$ are given by
\[
\overline{P}_{0,0}+\overline{Q}_{0,0}=\left[\begin{array}{cccc}
1 & 1 & 1 & 1\\
1 & 1 & 1 & 1\\
1 & 1 & 1 & 1\\
1 & 1 & 1 & 1
\end{array}\right],
\]
\[
\overline{P}_{1,1}+\overline{Q}_{1,1}=\left[\begin{array}{cccc}
1 & 0 & 1 & 0\\
0 & 1 & 0 & 1\\
1 & 0 & 1 & 0\\
0 & 1 & 0 & 1
\end{array}\right]
\] 
and
\[
\overline{P}_{2,2}+\overline{Q}_{2,2}=\left[\begin{array}{cccc}
1 & 1 & 0 & 0\\
1 & 1 & 0 & 0\\
0 & 0 & 1 & 1\\
0 & 0 & 1 & 1
\end{array}\right].
\] Thus, calculating a matrix $M$ with the Block Set obtained in Appendix~A, we get
\[
M=\underset{i=0}{\overset{2}{\sum}}\left(\overline{P}_{i,i}+\overline{Q}_{i,i}\right)\left\langle \Psi\left(i\right)\right.\left|\Psi\left(i\right)\right\rangle =\left[\begin{array}{cccc}
1 & \frac{1}{2} & \frac{1}{2} & 0\\
\frac{1}{2} & 1 & 0 & \frac{1}{2}\\
\frac{1}{2} & 0 & 1 & \frac{1}{2}\\
0 & \frac{1}{2} & \frac{1}{2} & 1
\end{array}\right].
\] 
We finally obtain the Gram matrix of the BSF algorithm from Corollary~\ref{bortho2}. Notice that the resulting matrix is the same as the obtained in Appendix A,
\[
G=2\left(M\right)-\left[\begin{array}{cccc}
1 & 1 & 1 & 1\\
1 & 1 & 1 & 1\\
1 & 1 & 1 & 1\\
1 & 1 & 1 & 1
\end{array}\right]=\left[\begin{array}{cccc}
1 & 0 & 0 & -1\\
0 & 1 & -1 & 0\\
0 & -1 & 1 & 0\\
-1 & 0 & 0 & 1
\end{array}\right].
\] 
This showed how each element of the block set works as a parameter for the Gram matrix of final states. Thus, Eq.~(\ref{bortheq}) is satisfied.

\begin{figure}[p]
    \centering
    \includegraphics[width=0.5\textwidth]{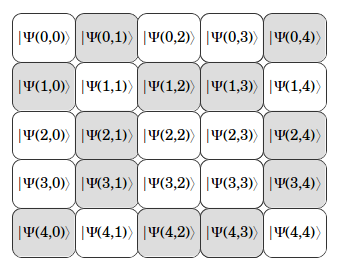}
    \caption{\label{im2} Graphical representation of
    $\widetilde{P}_{x}^{1}\widetilde{P}_{x}^{0}\left|\Psi\right\rangle$ using Eq.~(\ref{desc}) and
    taking as input $x=1001$. Grey boxes represent
    components 
    where the relative phase is inverted with respect to the initial
    state~$\ket{\Psi}$.}
    \label{im2}
\end{figure}
\begin{figure}[p]
    \centering
    \includegraphics[width=0.5\textwidth]{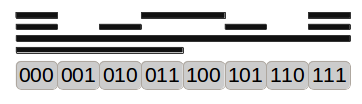}
    \caption{Each black layer represents the
    satisfiability of some formula over an input $x$. In decreasing
    order, the formulas are $x_{1}\oplus x_{2}$, $x_{1}\oplus x_{3}$,
    $x_{0}$ and $x_{3}$. If we give the same weight $\frac{1}{4}$ to
    all these formulas, then any input $x$ having exactly two layers
    over itself is orthogonal to $000$. In our example $001$, $100$
    and $101$.}
    \label{im4}
\end{figure}
\end{document}